\theoremstyle{plain}
\newtheorem{theorem}{Theorem}
\newtheorem*{proof*}{Proof}
\newtheorem{remark}{Remark}
\newtheorem{condition}{Condition}
\newtheorem{lemma}{Lemma}
\theoremstyle{definition}
\newtheorem*{definition}{Definition of a design}
\newcommand*{\Perm}[2]{{}^{#1}\!P_{#2}}%
\newcommand{\var}{\text{Var}}
\newcommand{\mse}{\text{MSE}}
\pgfplotsset{compat=1.13}
\tikzset{
mybrace/.style={decorate,decoration={brace,aspect=#1}}
}
\title{Properties of restricted randomization with implications for experimental design\footnote{We are grateful for helpful comments and suggestions by Guido Imbens and Per Johansson.}}
\author{Mattias Nordin\footnote{Department of Statistics and Uppsala Centre for Fiscal Studies, Uppsala University. Corresponding author, email: Mattias.Nordin@statistics.uu.se} and M\aa rten Schultzberg\footnote{Department of Statistics, Uppsala University}}
\begin{document}

\maketitle
\begin{abstract}
Recently, there as been an increasing interest in the use of heavily restricted randomization designs which enforces balance on observed covariates in randomized controlled trials. However, when restrictions are strict, there is a risk that the treatment effect estimator will have a very high mean squared error. In this paper, we formalize this risk and propose a novel combinatoric-based approach to describe and address this issue. First, we validate our new approach by re-proving some known properties of complete randomization and restricted randomization. Second, we propose a novel diagnostic measure for restricted designs that only use the information embedded in the combinatorics of the design. Third, we show that the variance of the mean squared error of the difference-in-means estimator in a randomized experiment is a linear function of this diagnostic measure. Finally, we identify situations in which restricted designs can lead to an increased risk of getting a high mean squared error and discuss how our diagnostic measure can be used to detect such designs. Our results have implications for any restricted randomization design and can be used to evaluate the trade-off between enforcing balance on observed covariates and avoiding too restrictive designs.
\end{abstract}
 
\noindent%
{\it Keywords:}  Experimental design, Restricted randomization, Rerandomization, Computationally intensive methods
\vfill
  
\newpage
\onehalfspace
\section{Introduction}
For a long time, the gold standard of causal inference has been randomized experiments. By randomly selecting one group of individuals to be treated and one group to serve as controls, it is guaranteed that the two groups are comparable \emph{in expectation}. However, for a realized treatment assignment, there is no guarantee--and in fact, it is generally not the case--that the two groups are similar in both observed and unobserved characteristics. Therefore, it is possible that an estimated treatment effect is substantially different from the true treatment effect. 

The traditional solution to this problem has been to use stratification, or blocked randomization, where randomization is performed within strata based on observed discrete (or discretized) covariates, thereby ensuring balance on these characteristics. However, with many, possibly continuous, covariates, there has not existed a straightforward solution to the problem of imbalances between the treatment and control groups.

Recently, due to vastly increasing computing power, a number of methods have emerged to tackle this problem. For instance, \cite{Morgan2012} suggest that it is possible to perform a large number of randomizations and pick a treatment assignment with a very small imbalance between the treatment group and control group (rerandomization). Furthermore, there are several papers who have developed frameworks and algorithms for finding ``optimal'' or ``near-optimal'' designs (\citealt{Lauretto2017, Kallus2018, Krieger2019}). Usually, such methods are based on restricting the set of treatment assignments so narrowly that only assignments with minimal imbalances in observed covariates, according to some criterion, are considered in the randomization.

However, by heavily restricting the set of admissible treatment assignments one also takes a greater risk of getting a ``bad'' design. \cite{efron_forcing_1971} and \cite{Wu1981} show that different versions of complete randomization (the unrestricted design loosely defined as the design where all possible treatment assignments are equally likely to be selected) minimizes the maximum mean squared error (MSE) of the treatment effect estimator that can be reached, the so-called \emph{minimax property of complete randomization}.

Just as \cite{efron_forcing_1971} and \cite{Wu1981}, we study the risk of getting a high MSE for different designs, but we do it in a different framework. While the aforementioned papers consider situations with the worst possible data that can occur (i.e., in a game against nature), we work in a framework that is independent on the data. 

Instead, we introduce the notion of the experimenter being behing a ``veil of ignorance'' where data cannot be observed. In such a case, the experimenter can be viewed as randomly selecting a design from a set of indisinguishable designs (a notion that we formalize below). While each design has its own MSE, this information is unobserved to the experimenter behind the veil of ignorance, and we show that the largest possible MSE that can be reached is minimized under complete randomization. This result is analogous to the minimax-property from \cite{efron_forcing_1971} and \cite{Wu1981}, but is proven in a framework without any assumption of a specific data-generating process.

The advantage with our framework is that we can go further and show that not only is complete randomization minimax optimal, but that the heavier a design is restricted (in the sense that the design contains fewer assignment vectors) the higher the MSE is in the worst possible case. Furthermore, the experimenter might not only be interested in minimizing the maximum MSE, but may consider other measures of risk (see, e.g., \citealt{kapelner2019} and \citealt{kapelner_harmonizing_2020}).\footnote{\cite{kapelner2019} and \cite{kapelner_harmonizing_2020}  consider a ``tail criterion'' where they are interested in the MSE of a quantile of the worst possible mean squared errors, such as the 5\% worst cases. They argue that instead of only focusing on the absolute worst case--a case which may almost never happen--it makes sense to instead consider the bad cases that happens with at least some frequency.} In our framework, a natural measure of risk is the variance of the MSE. We show that not only does the maximum MSE increase when designs become more restricted, but so does the variance of the MSE. However, we also show that for a given restriction, even behind the veil of ignorance it is possible to identify designs with a lower variance of the MSE. Specifically, we introduce the \emph{assignment correlation}--a measure of how correlated different assignment vectors in a design are with each other--and show that the variance of the MSE increases linearly in this measure. Intuitively, a design with a high assignment correlation contains less unique information, implying an increased risk of getting a large MSE.\footnote{Consistent with our result, in a recent paper, \cite{krieger_improving_2020} find that the power of the randomzation test becomes worse when assignment vectors are highly correlated with each other.} Fortunately, the assignment correlation is straightforward to calculate as it only depends on the combinatoric relationship of the assignment vectors.

The practical value therefore lies in that the experimenter can readily observe the assignment correlation. We show that for traditional designs, such as block randomization, the assignment correlation is bounded and cannot take on extreme values. However, more modern designs---which tries to enforce balance on continuous covariates---have the ability to search through millions of assignment vectors, and may also use algorithms that cleverly find a narrow set of admissible assignment vectors, thereby heavily restricting the design. In such cases there is a real possibility of getting a large assignment correlation.

Through a simple simulation study, we show that such algorithmic designs will in most cases work well. However, we also show that there are instances when they break down and yield very high assignment correlations. This phenomenon is most likely to occur when data contain outliers which forces some units to always be treated together.

While the theoretical result is shown under the veil of ignorance, we argue that the assignment correlation is a relevant measure of risk also in situations when the experimenter observes covariates likely to affect the outcome. For instance, with rerandomization, there is not currently any agreed upon guideline for how strictly balance should be enforced (see, e.g., discussions in \citealt{Morgan2012} and \citealt{Li2018}). On the one hand, the stricter balance is enforced, the lower the expected MSE is if covariates are informative in explaining the outcome. At the same time, with very strict designs, the ``randomness'' of the small set of remaining assignment vectors may be compromized, in the sense that these assignment vectors are very similar to each other. By observing a high assignment correlation in such a case, the experimenter is made aware of this potential issue and may take appropriate action. We elaborate more on this point in the discussion at the end of the paper.

In the next section we lay out the framework for restricted randomization that we work in. In Section \ref{sec:minimax} we re-prove the minimax property of complete randomization, whereas we in Section \ref{sec:comb_u} introduce the assignment correlation and prove that the variance of the MSE is a linearly increasing function of it. After elaborating on some of the properties of the assignment correlation, in Section \ref{sec:implications} we discuss implications for some common designs and perform a simple simulation study. Finally, in Section \ref{sec:discussion}, we discuss the practical implications for experimental designs in light of the theoretical concept of ``veil of ignorance'' and we also give advice for how the assignment correlation can be used in practice.

\section{Theory of restricted randomization designs}\label{sec:theory}
We consider an experimental setting where a sample of $N$ units is observed and we want to estimate the effect of some intervention (treatment) relative to some baseline (control). The outcome of interest is denoted $Y$, with $Y_i(1)$ being the potential outcome for unit $i$ if treated and $Y_i(0)$ the potential outcome for unit $i$ if not treated. This formulation is general in the sense that we allow for heterogeneous treatment effects. The interest is in estimating the sample average treatment effect, $\tau:=1/N\sum_{i=1}^N (Y_i(1) - Y_i(0))$. 

The division of units into treatment and control is made by a $N\times 1$ \emph{assignment vector}, $\mathbf{w}\in \mathcal{W}$, containing zeros (untreated) and ones (treated). For simplicity, we will only consider assignment vectors where the sample is split into two evenly sized groups (termed a forced balanced procedure by \citealt{rosenberger_randomization_2015}), which means that the cardinality of $\mathcal{W}$ is $|\mathcal{W}|=\binom{N}{N/2}=N_A$.

\begin{definition}
	We define a \emph{design} as a set of assignment vectors from which one assignment vector is randomly chosen. A design has to satisfy the mirror-property (\citealt{Johansson2018a,kapelner_harmonizing_2020}) which says that if assignment vector $\mathbf{w}$ is included in a given design, then the assignment vector $\mathbf{1-w}$ must also be included in that design. A given design is denoted $\mathcal{W}_H^k\subseteq \mathcal{W}$ where $H$ is the cardinality of the design and $k=1,\ldots,\binom{N_A/2}{H/2}$ indexes the different designs that are possible for a given $H$.
\end{definition}

We enforce the mirror-property to guarantee that the difference-in-means estimator is unbiased.  Note that the number of designs possible for a given $H$ is $\binom{N_A/2}{H/2}$ and not $\binom{N_A}{H}$, because of the the mirror-property.

 Let $\mathcal{K}$ be the set of all possible designs, implying that the cardinality of $\mathcal{K}$ is
\begin{equation}
|\mathcal{K}|=\sum_{H_2=1}^{N_A/2} \binom{N_A/2}{H_2} = 2^{N_A/2}-1,
\end{equation}
where $H_2=H/2$. Finally, we define $\mathcal{K}_H\subseteq \mathcal{K}$ as the set containing all designs of cardinality $H$.

The difference-in-means estimator of $\tau$ for an assignment vector $\mathbf{w}_j$ is
\begin{equation}
 	\hat{\tau}_j=\frac{1}{N/2}\left(\mathbf{w}_j^T\mathbf{Y}(1) - (\mathbf{1}- \mathbf{w}_j)^T\mathbf{Y}(0)\right),
\end{equation}
where $\mathbf{Y}(0)$ is a $N\times 1$ vector of potential outcomes if not treated and $\mathbf{Y}(1)$ the corresponding potential outcomes if treated. For a given design of size $H$, $\mathcal{W}_H^k$, there are $H$ estimates that can be obtained. Because of the mirror-property, the difference-in-means estimator is unbiased:
\begin{equation}
	\frac{1}{H}\sum_{\mathbf{w}_j\in \mathcal{W}_H^k} \hat\tau_j = \tau,
\end{equation}
The mean squared error (MSE) of the difference-in-means estimator for a design $\mathcal{W}_H^k \in \mathcal{K}_H$ is
\begin{equation}
\mse(\hat\tau_{\mathcal{\{W}_H^k\}}):=\frac{1}{H}\sum_{\mathbf{w}_j\in \mathcal{W}_H^k} (\hat{\tau}_j - \tau)^2.
\end{equation}
Note that because the estimator is unbiased, the MSE and the variance are identical.

A randomized design for which $H<N_A$, is sometimes called a \textit{restricted randomization design}\footnote{Frank Yates in his discussion in \cite{yates1948} originally used the term \emph{restricted} to refer to restrictions made to latin square designs in the context of agricultural experiments. Later on, \cite{youden_randomization_1972} used the term \emph{constrained randomization} also in the context of agricultural experiments.} whereas we call the case when $H=N_A$ \emph{complete randomization}.\footnote{In contexts where designs do not have to be balanced, a forced balanced design may also be called a restricted design. The case when treatment assignment is completely random (i.e., the number of treated and control units need not be the same) is sometimes called \emph{completely randomized design} (CRD), whereas the case of a forced balanced design may be called \emph{balanced completly randomized design} (BCRD), see e.g., \cite{Wu1981,kapelner_harmonizing_2020}. For simplicity, because we exclude CRD, we label BCRD as \emph{complete randomization}.} Examples of restricted designs include block designs and rerandomization designs, both of which imply a value of $H<N_A$. The purpose of such restrictions is most often to reduce the MSE of the estimator. Restricted randomization designs generally try to achieve this by excluding assignment vectors from $\mathcal{W}$ that have large imbalances between treatment and control groups in observed covariates, where these covariates are thought to be related to the potential outcomes.

However, as we show below, a restricted design also comes with some inherent risk that the selected design imply a higher MSE than under complete randomization. To formalize this risk, we make use of the idea of being behind a ``veil of ignorance'' where we have no \emph{a priori} knowledge of which assignment vectors are more likely to produce estimates closer to $\tau$ than any other assignment vectors. Another way of framing this idea is to say that we study different restricted designs without having observed any data.

As an example, consider the case with an experiment involving $N/2$ men and $N/2$ women and we want to use a block design to ensure that an equal number of men and women are in the treatment and control groups (i.e., there are two blocks, one for each gender). The value of $H$ in this case is $\binom{N/2}{N/4}^2<N_A$. However, behind the veil of ignorance we do not observe which units are men and which are women.\footnote{While it may seem odd to say that we do not observe this, note that it would be equivalent to the case where we do observe gender, but gender does not relate to the outcome. We elaborate more on this point in Section \ref{sec:discussion}.} In such a case, a block design which balances gender could be thought of as being randomly chosen from the set of all $\binom{N_A/2}{H/2}=\binom{N_A/2}{\binom{N/2}{N/4}^2/2}$ possible block designs with two equal-sized blocks.

The worst possible case is that data is in such a way that we select the design from the set which has the highest MSE. This ``worst case'' is similar to what \cite{efron_forcing_1971} and \cite{Wu1981} consider in their derivations of the minimax property of complete randomization, with the difference being that we do not assume any specific data-generating process. In the next section, we show how the minimax property of complete randomization can be proven in our framework.

\subsection{Re-proving the minimax property of complete randomization using the combinatoric approach \label{sec:minimax}}
The main result in this paper, theorem \ref{thm:varvar_u}, is proved by studying the behavior of the difference-in-means estimator across various restricted designs. In this section, we build intuition for this way of studying restricted designs by re-proving some known properties of restricted randomization designs. Specifically, we provide an alternative proof that complete randomization satisfies the minimax-property (i.e., that it has the smallest maximum MSE) and show that the more restricted a design is, the larger the maximum MSE and the variance of the MSE are.

The key to our formulation of the restricted randomization design is to consider various subsets of the full set of designs, $\mathcal{K}$. Because the number of possible designs grows extremely fast in $N$, we will show a simple example for $N=4$ to build intuition.\footnote{The number of designs is $2^{N_A/2}-1,$, which for $N=2,4,6,8,10\ldots$ equals $1, 7, 1023, 3.4\cdot10^{10}, 8.5\cdot 10^{37}\ldots$} For $N=4$ there are $N_A = \binom{4}{2} = 6$ different assignment vectors:
\begin{equation}
	\mathbf{w}_1 =
	\left[
		\begin{array}{c}
			1 \\ 1 \\ 0 \\ 0
		\end{array}
	\right],
	\mathbf{w}_2 =
	\left[
		\begin{array}{c}
			1 \\ 0 \\ 1 \\ 0
		\end{array}
	\right],
	\mathbf{w}_3 =
	\left[
		\begin{array}{c}
			1 \\ 0 \\ 0 \\ 1
		\end{array}
	\right],
	\mathbf{w}_4 =
	\left[
		\begin{array}{c}
			0 \\ 1 \\ 1 \\ 0
		\end{array}
	\right],
	\mathbf{w}_5 =
	\left[
		\begin{array}{c}
			0 \\ 1 \\ 0 \\ 1
		\end{array}
	\right],
	\mathbf{w}_6 =
	\left[
		\begin{array}{c}
			0 \\ 0 \\ 1 \\ 1
		\end{array}
	\right].
\end{equation}
For each assignment vector, there is an associated estimate of $\tau$, $\hat\tau_1,\ldots,\hat\tau_6$. For $H=2$ and $H=4$ there are three different designs each that satisfy the mirror property, whereas for $H=6$ there is one:
\begin{align}
	\mathcal{W}_2^1 &= \left\{\mathbf{w}_1, \mathbf{w}_6\right\},\quad
	\mathcal{W}_2^2 = \left\{\mathbf{w}_2, \mathbf{w}_5\right\},\quad
	\mathcal{W}_2^3 = \left\{\mathbf{w}_3, \mathbf{w}_4\right\}, \nonumber \\
	\mathcal{W}_4^1 &= \left\{\mathbf{w}_1, \mathbf{w}_2, \mathbf{w}_5, \mathbf{w}_6\right\},\quad
	\mathcal{W}_4^2 = \left\{\mathbf{w}_1, \mathbf{w}_3, \mathbf{w}_4, \mathbf{w}_6\right\},\quad
	\mathcal{W}_4^3 = \left\{\mathbf{w}_2, \mathbf{w}_3, \mathbf{w}_4, \mathbf{w}_5\right\}, \nonumber \\
	\mathcal{W}_6^1 &= \left\{\mathbf{w}_1, \mathbf{w}_2, \mathbf{w}_3, \mathbf{w}_4, \mathbf{w}_5, \mathbf{w}_6\right\}.
\end{align}

Each design has an associated MSE of $\hat\tau$, $\mse(\hat\tau_{\mathcal{\{W}_H^k\}})$. Suppose we have the following data:
\begin{equation}
	\mathbf{Y}(0) = 
	\left[
		\begin{array}{c}
			1 \\ 2 \\ 3 \\ 4
		\end{array}
	\right], \quad
	\mathbf{Y}(1) = \mathbf{Y}(0) +
	\left[
		\begin{array}{c}
			3 \\ 4 \\ -2 \\ 3
		\end{array}
	\right] =
	\left[
		\begin{array}{c}
			4 \\ 6 \\ 1 \\ 7
		\end{array}
	\right].
\end{equation}
The associated mean squared errors are
\begin{align}
	\mse(\hat\tau_{\{\mathcal{W}_2^1\}}) &= \frac{2}{8},\quad \mse(\hat\tau_{\{\mathcal{W}_2^2\}}) = \frac{50}{8},\quad \mse(\hat\tau_{\{\mathcal{W}_2^3\}}) = \frac{8}{8}, \nonumber \\
	\mse(\hat\tau_{\{\mathcal{W}_4^1\}}) &= \frac{26}{8},\quad \mse(\hat\tau_{\{\mathcal{W}_4^2\}}) = \frac{5}{8},\quad \mse(\hat\tau_{\{\mathcal{W}_4^3\}}) = \frac{29}{8}, \nonumber \\
	\mse(\hat\tau_{\{\mathcal{W}_6^1\}}) &= \frac{20}{8}. \label{eq:example_var}
\end{align}
In this section, we consider the distribution of MSE for different sets $\mathcal{K}_H$, i.e., the distribution of MSE for all the ways $H$ assignment vectors can be drawn from the total of $N_A=\binom{N}{N/2}$ assignment vectors. Naturally, there is only one way to draw $N_A$ assignment vectors from $N_A$. That is, there is only one design containing all assignment vectors: \emph{complete randomization}. The MSE for this design is
\begin{equation}
 	\mse(\hat\tau_{\{\mathcal{W}_{N_A}^1\}})=\frac{1}{N_A} \sum_{j=1}^{N_A} (\hat\tau_{j} - \tau)^2=\sigma^2_{CR}. \label{eq:var_cr}
\end{equation}
For the example above, $\mse(\hat\tau_{\{\mathcal{W}_6^1\}}) = \sigma^2_{CR} = \frac{20}{8}$. 

For the mean squared errors in equation \eqref{eq:example_var}, it is the case that for each $\mathcal{K}_H$ (i.e., for $H=2, 4, 6$), the average MSE is equal to $\sigma^2_{CR}$. This fact is not a coincidence but something that can be generalized under the following condition: 
\begin{condition}\label{cond:cond1}
	$\tilde{\mathcal{K}}_H=\{\tilde{\mathcal{W}}_H^1,\dots,\mathcal{\tilde W}_H^m\}\subseteq \mathcal{K}_H$ is a set of designs such that
	\begin{equation}
		\sum_{k=1}^m 1\left[\mathbf{w}\in\mathcal{\tilde W}_H^k\right] = c, \quad \forall \mathbf{w}\in\mathcal{W},
	\end{equation}
	where $1[\cdot]$ is the indicator function (taking the value of one if the statement in brackets is true, and zero otherwise) and $c$ is a constant.
\end{condition}
This condition says that the number of times an assignment vector occurs over all designs in $\tilde{\mathcal{K}}_H$ is the same for all assignment vectors in $\mathcal{W}$. For the example above, $c=1,2,1$ for $\tilde{\mathcal{K}}_H=\mathcal{K}_2,\mathcal{K}_4,\mathcal{K}_6$. The motivation behind condition \ref{cond:cond1} is that we are studying the behavior of different designs with no information available about which specific assignment vectors are more likely to produce an estimate close to $\tau$. Therefore, there is no reason to prefer one specific assignment vector over any other assignment vector and so we only consider sets of designs where each assignment vector is equally likely to be selected. We can now state the following result:

\begin{theorem}\label{thm:mean_fix}
	Consider a set of designs $\tilde{\mathcal{K}}_H = \{\tilde{\mathcal{W}}_H^1,\ldots,\tilde{\mathcal{W}}_H^m\}\subseteq \mathcal{K}_H$ satisfying condition \ref{cond:cond1}. The expected MSE of the difference-in-means estimator for a randomly chosen design in $\tilde{\mathcal{K}}_H$ is equal to the MSE under complete randomization. I.e., 
	\begin{equation}
		\frac{1}{m}\sum_{k=1}^{m}\mse(\hat\tau_{\{\mathcal{\tilde{W}}_H^k\}})
		=\sigma^2_{CR}.
	\end{equation}
\end{theorem}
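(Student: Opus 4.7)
The plan is to begin from the definition of the average MSE across $\tilde{\mathcal{K}}_H$, expand using the definition of $\mse(\hat\tau_{\{\mathcal{\tilde W}_H^k\}})$, and then interchange the two finite sums. Concretely, I would rewrite the inner sum over $\mathbf{w}_j\in\mathcal{\tilde W}_H^k$ as a sum over all of $\mathcal{W}$ with an indicator, i.e.,
\begin{equation}
\frac{1}{m}\sum_{k=1}^{m}\mse(\hat\tau_{\{\mathcal{\tilde{W}}_H^k\}})
=\frac{1}{mH}\sum_{k=1}^{m}\sum_{j=1}^{N_A}(\hat\tau_j-\tau)^2\,\mathbf{1}\!\left[\mathbf{w}_j\in\mathcal{\tilde{W}}_H^k\right].
\end{equation}
After swapping the order of summation, each term $(\hat\tau_j-\tau)^2$ is weighted by $\sum_{k=1}^{m}\mathbf{1}[\mathbf{w}_j\in\mathcal{\tilde{W}}_H^k]$, which is exactly the count governed by Condition \ref{cond:cond1}.

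Second, I would invoke Condition \ref{cond:cond1} to replace that inner count by the constant $c$, collapsing the expression to $\frac{c}{mH}\sum_{j=1}^{N_A}(\hat\tau_j-\tau)^2$. Comparing with \eqref{eq:var_cr}, which writes $\sigma^2_{CR}=\frac{1}{N_A}\sum_{j=1}^{N_A}(\hat\tau_j-\tau)^2$, the proof reduces to verifying the identity $\frac{c}{mH}=\frac{1}{N_A}$, i.e., $cN_A=mH$.

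This last identity is a standard double-counting argument on the incidences between assignment vectors in $\mathcal{W}$ and designs in $\tilde{\mathcal{K}}_H$. Counting by design yields $mH$ since each of the $m$ designs contains exactly $H$ assignment vectors; counting by assignment vector yields $cN_A$ since, by Condition \ref{cond:cond1}, each of the $N_A$ vectors in $\mathcal{W}$ appears in exactly $c$ designs of $\tilde{\mathcal{K}}_H$. Substituting $c/(mH)=1/N_A$ back then produces $\sigma^2_{CR}$ exactly.

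None of the three steps is truly an obstacle; the proof is essentially a one-line Fubini argument followed by a counting identity. The only place where care is needed is in the bookkeeping at the sum-swap step: one must extend the inner sum to range over all $N_A$ assignment vectors via the indicator before Condition \ref{cond:cond1} can be applied, and one must remember that the condition is about incidences across \emph{designs}, not across assignment vectors within a single design. Once that is handled cleanly, the conclusion follows without any assumption on the potential outcomes $\mathbf{Y}(0),\mathbf{Y}(1)$, which is precisely the ``veil of ignorance'' feature the authors wish to highlight.
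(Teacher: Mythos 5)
Your proposal is correct and follows essentially the same route as the paper's proof: introduce the indicator $I_{jk}=1[\mathbf{w}_j\in\tilde{\mathcal{W}}_H^k]$, swap the order of summation, apply Condition \ref{cond:cond1} to replace the incidence count by $c$, and close with the double-counting identity $cN_A=mH$. No gaps.
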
 
\begin{proof}
	See the supporting information.
\end{proof}
The intuition behind this theorem is straightforward. By condition \ref{cond:cond1}, each assignment vector occurs an equal number of times over all the designs in $\tilde{\mathcal{K}}_H$. For any such set, the average MSE over all the designs in this set is therefore always going to be the same. And because each assignment vector occurs an equal number of times under complete randomization, the expected MSE is always equal to the MSE under complete randomization.
\begin{remark}
	$\mathcal{K}_H$ satisfies condition \ref{cond:cond1}, which means that the expected MSE for a randomly selected design out of all possible designs containing $H$ assignment vectors is equal to the MSE under complete randomization.
\end{remark}
The theorem therefore implies that by randomly selecting a design containing $H$ assignment vectors out of all possible such designs, the expected MSE of the difference-in-means estimator is the same as under complete randomization. However, the distribution of the mean squared errors will be different for different values of $H$.
\begin{remark}\label{rem:minmax}
  By theorem \ref{thm:mean_fix}, all sets of designs under the veil of ignorance (i.e., satisfying condition \ref{cond:cond1}) have the same expected MSE, which equals the MSE under complete randomization, Therefore, the maximum MSE for any such set of designs can never be smaller than the MSE for the set containing only complete randomization ($\mathcal{K}_{N_A}=\{\mathcal{W}\}$). Furthermore, all sets other than the set which contains only complete randomization contain more than one design. Therefore, if all difference-in-means estimates are distinct, the inequalities will in general be strict and $\mathcal{K}_{N_A}$ has the uniquely smallest maximum MSE.
\end{remark}
Remark \ref{rem:minmax} implies a version of the minimax property of complete randomization (\citealt{efron_forcing_1971,Wu1981}) which says that complete randomization minimizes the maximum MSE. In addition, we can go further by proving that more restrictive designs have a greater maximum MSE:
\begin{theorem}\label{thm:max_fix}
	For $\mathcal{K}_H$ and $\mathcal{K}_{H'}$ such that $H<H'$, it is the case that 
	\begin{enumerate}[i)]
		\item the maximum MSE in the set of all designs in $\mathcal{K}_H$ is greater than the maximum MSE in the set of all designs in $\mathcal{K}_{H'}$.
		\item the minimum MSE in the set of all designs in $\mathcal{K}_H$ is smaller than the minimum MSE in the set of all designs in $\mathcal{K}_{H'}$.
	\end{enumerate}
	If all difference-in-means estimates are distinct, the inequalities are strict.
\end{theorem}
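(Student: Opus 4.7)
The plan is to reduce every MSE in $\mathcal{K}_H$ to an arithmetic mean over mirror-pair squared errors, and then invoke the elementary fact that the average of the top $m$ entries of a sorted list weakly decreases as $m$ grows. By the mirror property and linearity of the difference-in-means estimator, $\hat\tau_j + \hat\tau_{j'} = 2\tau$ whenever $\mathbf{w}_{j'} = \mathbf{1}-\mathbf{w}_j$, so $(\hat\tau_j-\tau)^2 = (\hat\tau_{j'}-\tau)^2$. I would index the $N_A/2$ mirror pairs by $i=1,\dots,N_A/2$, let $p_i$ denote the common squared deviation in pair $i$, and note that every design in $\mathcal{K}_H$ corresponds to a choice of exactly $H/2$ pairs. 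Consequently
\begin{equation}
\mse(\hat\tau_{\{\mathcal{W}_H^k\}}) \;=\; \frac{1}{H}\sum_{\mathbf{w}_j\in \mathcal{W}_H^k}(\hat\tau_j-\tau)^2 \;=\; \frac{1}{H/2}\sum_{i\in \mathcal{I}(k)} p_i,
\end{equation}
where $\mathcal{I}(k)\subseteq\{1,\dots,N_A/2\}$ collects the $H/2$ pairs constituting design $k$. The MSE is thus the arithmetic mean of the $H/2$ selected pair values.

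With this reduction in place, sort the pair values as $p_{(1)}\geq p_{(2)}\geq \dots \geq p_{(N_A/2)}$. Because any $H/2$-subset of pairs yields an admissible design, the maximum MSE over $\mathcal{K}_H$ equals the average of the top $H/2$ values $p_{(i)}$, while the minimum MSE equals the average of the bottom $H/2$ values. Parts (i) and (ii) then reduce to the elementary claim that for $1\le m<m'\le N_A/2$,
\begin{equation}
\frac{1}{m}\sum_{i=1}^{m} p_{(i)} \;\geq\; \frac{1}{m'}\sum_{i=1}^{m'} p_{(i)}, \qquad \frac{1}{m}\sum_{i=1}^{m} p_{(N_A/2+1-i)} \;\leq\; \frac{1}{m'}\sum_{i=1}^{m'} p_{(N_A/2+1-i)}.
\end{equation}
Both follow from the algebraic identity
\begin{equation}
\frac{1}{m}\sum_{i=1}^m a_i \;-\; \frac{1}{m'}\sum_{i=1}^{m'} a_i \;=\; \frac{m'-m}{m'}\Bigl(\frac{1}{m}\sum_{i=1}^m a_i \;-\; \frac{1}{m'-m}\sum_{i=m+1}^{m'} a_i\Bigr),
\end{equation}
because when the $a_i$'s are sorted in decreasing order the top-$m$ block has an average at least as large as the average of the next $m'-m$ entries; the bottom-average statement is symmetric.

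For strictness under the hypothesis that the estimates $\hat\tau_1,\dots,\hat\tau_{N_A}$ are all distinct, I would show that the pair values $p_i$ are themselves pairwise distinct: $p_i=p_{i'}$ with $i\neq i'$ forces $\hat\tau_j-\tau=\pm(\hat\tau_{j'}-\tau)$, so $\hat\tau_j$ coincides either with $\hat\tau_{j'}$ or with its mirror estimate, contradicting distinctness. With strictly ordered $p_{(i)}$, the parenthesis in the identity above is strictly positive, so both inequalities in the theorem become strict. I do not anticipate a serious obstacle here; the only care needed is the mirror-pair bookkeeping and verifying that distinctness of $\hat\tau_j$'s transfers to distinctness of pair values $p_i$.
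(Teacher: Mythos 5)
Your proposal is correct and follows essentially the same route as the paper's proof: both reduce the MSE to an average over the $H/2$ selected mirror-pair squared deviations, identify the maximum (minimum) MSE over $\mathcal{K}_H$ with the average of the top (bottom) $H/2$ order statistics, and conclude via the monotonicity of top-$m$ averages, with strictness following from distinctness of the estimates. Your explicit algebraic identity and the check that distinct estimates yield distinct pair values are just slightly more detailed packaging of the same argument.
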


\begin{proof}
	See the supporting information.
\end{proof}

Theorem \ref{thm:max_fix} says that the worst possible design, in terms of MSE, gets worse as $H$ decreases. That is, under the veil of ignorance, the more restrictive the design is, the larger the maximum MSE is.

To measure the risk associated with randomly selecting a design, we not only study the maximum possible MSE, but also the variance of the MSE of the difference-in-means estimator. We define this variance as
\begin{equation}
	\var\left(\mse(\hat\tau_{\{\mathcal{W}_H\}}):\mathcal{W}_H\in \mathcal{K}_H\right):=\frac{1}{\binom{N_A/2}{H/2}} \sum_{k=1}^{\binom{N_A/2}{H/2}} \left( \frac{1}{H} \sum_{\mathbf{w}_j \in \mathcal{W}_{H}^k} (\hat\tau_j-\tau)^2 -\sigma^2_{CR}\right)^2. \label{eq:def_var_mse}
\end{equation}

\begin{theorem}
	\label{thm:varvar_h}{}
	Under condition \ref{cond:cond1}, the variance of the MSE of the difference-in-means estimator is decreasing in $H$.
\end{theorem}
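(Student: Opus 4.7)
The plan is to collapse the problem to a classical simple-random-sampling-without-replacement variance computation by exploiting the mirror property. First, I would observe that within any mirror pair $\{\mathbf{w}_j,\mathbf{1}-\mathbf{w}_j\}$ the two difference-in-means estimates satisfy $\hat\tau_j + \hat\tau_{j'} = 2\tau$ directly from the definition, so $(\hat\tau_j - \tau)^2 = (\hat\tau_{j'} - \tau)^2$. Denote this common value by $s_p$ for $p = 1,\ldots,N_2$ with $N_2 = N_A/2$, and set $H_2 = H/2$. Because every design $\mathcal{W}_H^k$ is a union of $H_2$ mirror pairs, its MSE reduces to the sample mean $\frac{1}{H_2}\sum_{p\in P_k} s_p$ over its pair index set $P_k \subseteq \{1,\ldots,N_2\}$, and $\sigma^2_{CR}$ equals the population mean $\bar s := \frac{1}{N_2}\sum_p s_p$.

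Second, I would recognize that $\mathcal{K}_H$ enumerates each of the $\binom{N_2}{H_2}$ size-$H_2$ subsets of $\{1,\ldots,N_2\}$ exactly once. Consequently, the quantity defined in equation \eqref{eq:def_var_mse} is exactly the variance of $\frac{1}{H_2}\sum_{p\in P} s_p$ when $P$ is drawn uniformly at random among such subsets, i.e., the sampling variance of a mean under simple random sampling without replacement from the finite population $\{s_1,\ldots,s_{N_2}\}$. Applying the standard finite-population formula gives
\begin{equation*}
\var\!\left(\mse(\hat\tau_{\{\mathcal{W}_H\}}):\mathcal{W}_H\in\mathcal{K}_H\right) \;=\; \frac{\sigma_s^2}{H_2}\cdot\frac{N_2 - H_2}{N_2 - 1}, \qquad \sigma_s^2 = \frac{1}{N_2}\sum_{p=1}^{N_2}(s_p - \bar s)^2.
\end{equation*}

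Third, monotonicity is immediate: the $H_2$-dependent factor can be written as $\frac{1}{N_2-1}\!\left(\frac{N_2}{H_2}-1\right)$, which is strictly decreasing in $H_2$ on $\{1,\ldots,N_2\}$, while $\sigma_s^2$ does not depend on $H$. Hence the variance of the MSE is decreasing in $H$, strictly so whenever the pair-level squared errors $s_p$ are not all equal.

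The main obstacle is the combinatorial bookkeeping: confirming the mirror-pair collapse and verifying that $\mathcal{K}_H$ corresponds exactly to uniform sampling of size-$H_2$ pair subsets. Both are clean consequences of the mirror property and the cardinality computation used in the definition of $\mathcal{K}_H$, after which the classical finite-population variance formula makes the monotonicity transparent. Note that Condition \ref{cond:cond1} is used only to guarantee that the centering $\sigma^2_{CR}$ in equation \eqref{eq:def_var_mse} is in fact the mean of the MSE distribution (via Theorem \ref{thm:mean_fix}); for the canonical set $\mathcal{K}_H$ this is automatic, since each assignment vector appears in exactly $\binom{N_2-1}{H_2-1}$ designs.
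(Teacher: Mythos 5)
Your proposal is correct, and it reaches the paper's result by a cleaner route. The paper's proof expands the square in equation \eqref{eq:def_var_mse} directly, counts how often each single assignment vector and each non-mirror pair appears across the designs in $\mathcal{K}_H$, and lands on
$\var = 2\frac{N_A-H}{HN_A}(\bar p - \bar q)$,
where $\bar p$ and $\bar q$ are the averages of $r_j^2$ and of $r_jr_{j'}$ over non-mirror pairs; monotonicity then follows from the prefactor, but the paper still has to assert $\bar p - \bar q \ge 0$, which is not self-evident from that expression alone. You instead collapse each mirror pair to a single value $s_p$ (justified, since $\hat\tau_j+\hat\tau_{j'}=2\tau$ for mirrors) and observe that $\mathcal{K}_H$ is exactly the collection of all size-$H/2$ subsets of the $N_A/2$ pairs, so the quantity in \eqref{eq:def_var_mse} is the sampling variance of a mean under simple random sampling without replacement. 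The two answers agree: one can check that $\sigma_s^2=\frac{N_A-2}{N_A}(\bar p-\bar q)$, so your $\frac{\sigma_s^2}{H_2}\cdot\frac{N_2-H_2}{N_2-1}$ equals the paper's $2\frac{N_A-H}{HN_A}(\bar p-\bar q)$. What your approach buys is that nonnegativity of the data-dependent factor is automatic ($\sigma_s^2\ge 0$ by construction, with strictness exactly when the pair-level squared errors are not all equal), and the $H$-dependence is isolated in the standard finite-population correction; what it gives up is that the paper's pair-counting machinery ($\bar q$, the $\sum_{j'\neq j,-j}$ bookkeeping) is reused verbatim in the proof of Theorem \ref{thm:varvar_u}, so the paper's less elegant expansion is doing double duty. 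Your remark on the role of Condition \ref{cond:cond1} is also accurate: for the canonical $\mathcal{K}_H$ over which \eqref{eq:def_var_mse} is defined, the condition holds automatically with $c=\binom{N_A/2-1}{H/2-1}$.
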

\begin{proof}
	See the supporting information.
\end{proof}

Theorem \ref{thm:varvar_h} reinforces the lesson from theorem \ref{thm:max_fix} that the risk of getting a large MSE of the difference-in-means estimator is smaller the greater $H$ is. Without any knowledge about $\mathbf{Y}(0)$ and $\mathbf{Y}(1)$ it therefore seems as if it is always better to increase $H$ so as to decrease the variance of the MSE of the difference-in-means estimator. However, we have only shown that this is true for sets which contain \emph{all} designs of size $H$ (i.e., $\mathcal{K}_H$). It might seem impossible to select a subset of 
$\mathcal{K}_H$ without any \emph{a priori} information, but as the following section will show, it is in fact possible to select subsets of $\mathcal{K}_H$ where the variance of the MSE is smaller. It is possible to do so by using the combinatorial relationship between different assignment vectors.

\subsection{Combinatorial uniqueness \label{sec:comb_u}}
There is one source of information that, to the best of our knowledge, has not been utilized in experimental design, namely the information in the combinations, what we call the \emph{pairwise uniqueness} of the assignment vectors in a set. In this section, we show that the variance of the MSE of the difference-in-means estimator depends on this uniqueness, and that the uniqueness can be used as a source of information about how ``risky'' a design is behind the veil of ignorance.

We begin with a simple example that illustrates what we mean with the concept of uniqueness. Consider the case where we have an experiment with $N=8$. If the first four units are treated and the last four are not, the corresponding assignment vector is given by $\mathbf{w} = [\begin{array}{cccccccc}1&1&1&1&0&0&0&0\end{array}]^T$. Consider a set of two assignment vectors,  
\begin{equation}
\{[\begin{array}{cccccccc}1&1&1&1&0&0&0&0\end{array}]^T, [\begin{array}{cccccccc}1&1&1&0&1&0&0&0\end{array}]^T\}
\end{equation}
In the first vector, units 1, 2, 3 and 4 are treated and in the second, units 1, 2, 3 and 5 are treated. The uniqueness of this pair of vectors is defined as the number of units that are uniquely assigned to treatment in the first but not the second vector, which in this case is one. Consider another example with the set of vectors
\begin{equation}
\{[\begin{array}{cccccccc}1&1&1&1&0&0&0&0\end{array}]^T, [\begin{array}{cccccccc}1&0&0&0&1&1&1&0\end{array}]^T\}
\end{equation}
Here, the uniqueness is three, as there are three units assigned to treatment in the first but not the second vector. The uniqueness is theoretically bounded between 0 (where an assignment vector is compared to itself) and $N/2$ (where the assignment vector is compared to its mirror).

Let $w^i_j$ be the $i$th element of assignment vector $j$. The uniqueness between two assignment vectors $j$ and $j'$ is defined as
\begin{equation}
U_{j,j'}:= \sum_{i=1}^N 1[w^i_j=1\land w^i_{j'}=0],
\end{equation}
where $1[\cdot]$ is the indicator function.

The uniqueness between two assignment vectors does not depend on data, and so it is a measure that is available even behind the veil of ignorance. Our interest is in studying whether it is possible to use this measure to reduce the risk of a large MSE of the difference-in-means estimator. 

To study this question, we continue to restrict attention to sets of designs with $H$ assignment vectors, $\tilde{\mathcal{K}}_H\subseteq \mathcal{K}_H$, which satisfies condition \ref{cond:cond1}, i.e., where all assignment vectors are equally likely to occur over all designs in the set. By theorem \eqref{thm:mean_fix}, we therefore know that the expected MSE for any sets of designs under study equals $\sigma^2_{CR}$. We now add a second condition:

\begin{condition} \label{cond:cond2}
	$\tilde{\mathcal{K}}_H=\{\tilde{\mathcal{W}}_H^1,\dots,\mathcal{\tilde W}_H^m\}\subseteq \mathcal{K}_H$ is a set of designs such that
	\begin{equation}
		\sum_{k=1}^m 1\left[\{\mathbf{w}_j, \mathbf{w}_{j'}\}\subseteq \mathcal{\tilde W}_H^k \right] = d_u, \quad \forall \mathbf{w}_j,\mathbf{w}_{j'}\in\mathcal{W}:U_{j,j'}=u,
	\end{equation}
	where $1[\cdot]$ is the indicator function and $d_u$ is a constant for a given value of $u$.
\end{condition}

This condition says that two different assignment vectors $\mathbf{w}_j$ and $\mathbf{w}_{j'}$ will occur together in the same number of designs as any other pair of assignment vectors with the same pairwise uniqueness. The intuition behind this condition is that behind the veil of ignorance, for pairs of assignment vectors with the same uniqueness there is no information available which allows us to say that one pair should occur more often than any other pair. Condition \ref{cond:cond2} therefore requires such pairs to occur the same number of times.

The insight that the pairwise uniqueness can provide additional information about the variability of the MSE leads us to the following result:
\begin{theorem}\label{thm:varvar_u}
	Consider a set of designs $\tilde{\mathcal{K}}_H = \{\tilde{\mathcal{W}}_H^1,\dots,\mathcal{\tilde W}_H^m\}\subseteq \mathcal{K}_H$ satisfying conditions \ref{cond:cond1} and \ref{cond:cond2}. The variance of the MSE of the difference-in-means estimator can be written as
	\begin{equation}
		\var\left(\mse(\hat\tau_{\{\mathcal{W}_H\}}):\mathcal{W}_H\in \mathcal{\tilde K}_H\right)=\frac{4}{N^2}\psi \left( 2\frac{N_A-H}{HN_A} + 
    	\frac{H-2}{H}\phi(\mathcal{\tilde{ K}}_H)-\frac{N_A-2}{N_A}\phi(\mathcal{ K}) \right), \label{eq:varvar_in_sample}
	\end{equation}
	where $\psi=\psi(\mathbf{Y}(0),\mathbf{Y}(1))$ is a function of the data, $\phi(\tilde{\mathcal{K}}_H)$ is the expected value of $\left(4/N\right)^2(u-N/4)^2$ over all designs in $\tilde{\mathcal{K}}_H$ and $\phi(\mathcal{K})$ is the expected value of $\left(4/N\right)^2(u-N/4)^2$ over all $N_A(N_A-2)$ possible pairwise combinations of two distinct assignment vectors (excluding mirrors).
\end{theorem}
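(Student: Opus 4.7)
The plan is to compute $\var(T)$ directly, where $T_k := \mse(\hat\tau_{\{\mathcal{W}_H^k\}}) = H^{-1}\sum_{\mathbf{w}_j\in \mathcal{W}_H^k}e_j^2$ and $e_j := \hat\tau_j - \tau$. By Theorem \ref{thm:mean_fix}, $\var(T) = m^{-1}\sum_k T_k^2 - \sigma_{CR}^4$, so it suffices to expand $T_k^2$ and swap orders of summation, rewriting $m^{-1}\sum_k T_k^2$ as a sum over ordered pairs $(j,j')$ with weights equal to the frequency with which $\mathbf{w}_j$ and $\mathbf{w}_{j'}$ co-occur in designs of $\tilde{\mathcal{K}}_H$. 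I will split the pair sum into three groups. Identity pairs ($j=j'$) carry weight $H/N_A$ by condition \ref{cond:cond1} and contribute $(H/N_A)\sum_j e_j^4$. Mirror pairs co-occur in every design by the mirror property, and because $e_{\text{mirror}(j)}^2 = e_j^2$ they contribute the same total. Non-identity non-mirror pairs carry weight $d_u/m$ depending only on $u=U_{j,j'}$ by condition \ref{cond:cond2}.

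The central lemma is that the uniform average of $e_j^2 e_{j'}^2$ over non-identity non-mirror pairs at a fixed uniqueness $u$ has the form $\bar G_u = \gamma_0 + \gamma_2(u-N/4)^2$, where $\gamma_0,\gamma_2$ depend only on the potential outcomes. To prove this I will use $\mathbf{S} := \mathbf{Y}(1)+\mathbf{Y}(0)$ together with the identity $e_j = (2/N)\tilde{\mathbf{w}}_j^T\mathbf{S}$ (for $\tilde{\mathbf{w}}_j := \mathbf{w}_j - \mathbf{1}/2$), and change variables to $\mathbf{v} := \tilde{\mathbf{w}}_j + \tilde{\mathbf{w}}_{j'}$ and $\boldsymbol{\delta} := \tilde{\mathbf{w}}_j - \tilde{\mathbf{w}}_{j'}$; these have disjoint supports and entry counts determined by $u$ alone ($u$ entries of each sign in $\boldsymbol{\delta}$, and $N/2-u$ of each sign in $\mathbf{v}$). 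This gives $e_j^2 e_{j'}^2 = N^{-4}\bigl((\mathbf{v}^T\mathbf{S})^2 - (\boldsymbol{\delta}^T\mathbf{S})^2\bigr)^2$, and averaging over non-identity non-mirror pairs at a given $u$ reduces by permutation symmetry over unit labels to fourth-order moments of an urn model. Each such moment is polynomial in $u$ of degree at most two, so $\bar G_u$ is a quadratic in $u$. The mirror involution $\mathbf{w}_{j'}\mapsto \mathbf{1}-\mathbf{w}_{j'}$ swaps $\mathbf{v}$ with $\boldsymbol{\delta}$, sends $u \mapsto N/2 - u$, and preserves $\bar G_u$, which forces this quadratic to be symmetric about $u=N/4$.

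With the lemma I substitute $\bar G_u$ back into the non-identity non-mirror contribution, and using the counting identities (letting $N_u$ denote the number of ordered non-identity non-mirror pairs at uniqueness $u$) $\sum_u N_u(d_u/m) = H(H-2)$ and $\sum_u N_u(d_u/m)(u-N/4)^2 = (N^2/16)H(H-2)\phi(\tilde{\mathcal{K}}_H)$---both of which follow from the definition of $\phi$---I obtain
\[
\frac{1}{m}\sum_k T_k^2 = \frac{2}{HN_A}\sum_j e_j^4 + \frac{H-2}{H}\gamma_0 + \frac{N^2(H-2)\gamma_2}{16H}\phi(\tilde{\mathcal{K}}_H).
\]
Applying the same derivation at $H=N_A$ gives an analogous formula for $\sigma_{CR}^4$ with $\phi(\mathcal{K})$ in place of $\phi(\tilde{\mathcal{K}}_H)$. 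Subtracting, and using the $u=0$ consequence of the key lemma $\sum_j e_j^4/N_A = \gamma_0 + \gamma_2 N^2/16$, collapses the $\gamma_0$ and $\sum_j e_j^4$ contributions into a single multiple of $N^2\gamma_2/16$, leaving exactly the bracketed expression in \eqref{eq:varvar_in_sample} multiplied by $N^2\gamma_2/16$. Setting $\psi := N^4\gamma_2/64$ converts the prefactor to $4\psi/N^2$ and gives the claim, with $\psi$ manifestly a function of $\mathbf{Y}(0),\mathbf{Y}(1)$ alone.

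The main obstacle is the key lemma, specifically establishing that $\bar G_u$ is at most quadratic in $u$. This requires enumerating the index-partition patterns in the fourfold expansion of the average of $\tilde w_j^{i_1}\tilde w_j^{i_2}\tilde w_{j'}^{i_3}\tilde w_{j'}^{i_4}$ (weighted by $S_{i_1}S_{i_2}S_{i_3}S_{i_4}$) over non-identity non-mirror pairs at fixed $u$, and verifying that each resulting urn moment is a polynomial in $u$ of degree at most two. Once that is done the symmetry about $N/4$ follows from the mirror involution, and the remaining algebra is a careful but routine accounting exercise.
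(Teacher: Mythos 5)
Your proposal is correct, and while it shares the paper's overall architecture---decompose $m^{-1}\sum_k T_k^2$ by co-occurrence weights, isolate the identity and mirror pairs, and reduce the non-trivial pairs to a class average at fixed uniqueness $u$ that is a quadratic in $u$ symmetric about $N/4$---the way you establish that central fact is genuinely different from the paper's. The paper works with signed potential outcomes $\alpha_{ix}$, expands $r_ar_b$ into a long sum of double, triple and quadruple index sums, and then explicitly computes the joint inclusion probabilities $p'_{x_1x_2x_3x_4}$, $p''_{x_1x_2x_3x_4}$, $p'''_{x_1x_2x_3x_4}$ as degree-two polynomials in $u$ by enumerating the $16$, $64$ and $256$ treatment-status permutations; it then discards the constant and linear parts via its Lemma 1, which exploits that the weights $v_u(\tilde{\mathcal K}_H)-n_u/N_A^2$ sum to zero and are symmetric about $N/4$. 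You instead use the identity $\hat\tau_j-\tau=(2/N)(\mathbf{w}_j-\mathbf{1}/2)^T(\mathbf{Y}(1)+\mathbf{Y}(0))$ (which is correct, and which the paper never exploits) together with the sum/difference change of variables to $\mathbf{v},\boldsymbol{\delta}$, reducing the class average to fourth moments of a four-category urn with cell sizes $(N/2-u,u,u,N/2-u)$; the degree-$\le 2$ bound then follows from power-sum identities ($p_1(\mathbf v)=p_3(\mathbf v)=0$, $p_2(\mathbf v)=p_4(\mathbf v)=2(N/2-u)$, etc.), and the symmetry about $N/4$ from the mirror involution swapping $\mathbf v$ and $\boldsymbol{\delta}$, rather than from symmetry of the weights. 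Your handling of the endpoints (extending the polynomial to $u=0$ so that $\bar G_0=N_A^{-1}\sum_j e_j^4$, then cancelling $\gamma_0$ against the $H=N_A$ benchmark) is a legitimate alternative to the paper's explicit $v_0=v_{N/2}=1/H$, $n_0=n_{N/2}=N_A$ bookkeeping, and your counting identities for $\sum_u N_u d_u/m$ check out. What your route buys is a far more compact $\psi$ (a single fourth-moment functional of $\mathbf{S}=\mathbf{Y}(1)+\mathbf{Y}(0)$, namely $N^4\gamma_2/64$) and the avoidance of the pattern enumeration; what the paper's route buys is fully explicit closed forms for all the joint probabilities, which may be of independent use. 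The one piece you correctly flag as remaining work---verifying that every urn moment arising in the fourfold expansion is a polynomial of degree at most two in $u$---is routine but must actually be carried out for all coincidence patterns, exactly as the paper does in its own notation.
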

\begin{proof}
	See the supporting information.
\end{proof}
The parameter $\phi(\tilde{\mathcal{K}}_H)$ is written as
\begin{equation}
	\phi(\tilde{\mathcal{K}}_H):=\left(\frac{4}{N}\right)^2\sum_{u=1}^{N/2-1}\nu_u(\mathcal{\tilde K}_H)(u-N/4)^2, \label{eq:def_phi}
\end{equation}
where $\nu_u(\mathcal{\tilde K}_H)$ is the proportion of pairwise uniqueness values (excluding mirrors) in the designs in $\tilde{\mathcal{K}}_H$ with uniqueness $U=u$ and $\left(4/N\right)^2$ is a normalizing constant. For much of the discussion below, it will be useful to discuss the corresponding parameter for a given design, $\mathcal{W}_H$, defined as
\begin{equation}
	\varphi(\mathcal{W}_H) := \left(\frac{4}{N}\right)^2 \sum_{u=1}^{N/2-1} \mu_u(\mathcal{W}_H) (u-N/4)^2,
\end{equation}
where $\mu_u(\mathcal{W}_H)$ is the proportion of pairwise uniqueness values (excluding mirrors) in design $\mathcal{W}_H$. It is straightforward to show that
\begin{equation}
	\phi(\tilde{\mathcal{K}}_H) =\frac{1}{m} \sum_{k=1}^m \varphi(\mathcal{\tilde W}_H^k).
\end{equation}
That is, $\phi(\tilde{\mathcal{K}}_H)$ is the average value of $\varphi(\mathcal{W}_H)$ over the designs in $\tilde{\mathcal{K}}_H$.

The parameter $\varphi$ is a key parameter bounded between zero (if $U=N/4$ for all pairs) and one.\footnote{Technically, the upper bound is $\left(\frac{4}{N}\right)^2(N/4-1)^2=1-\frac{8}{N}+\frac{16}{N^2}<1$, which happens if $U$ equals 1 or $N/2-1$ for all pairs.} As $\varphi$ increases, the less unique information is available in each assignment vector in the design. At first glance, one might have expected that information should be increasing in uniqueness. However, because mirrors are always included, this is not the case. For two assignment vectors $\mathbf{w}_j,\mathbf{w}_{j'}$ with $U_{j,j'}=u$, the mirrors are also included with associated values of $U_{N_A+1-j,N_A+1-j'}=u$ and $U_{j,N_A+1-j'}=U_{N_A+1-j,j'}=N/2-u$. The pairwise uniquenesses for a design will therefore always be symmetrically distributed around $N/4$.

A large value of $\varphi$ implies that the assignment vectors in the design are highly correlated with each other, which also means that the associated treatment effect estimates will correlate with each other. On the other hand, with a small value of $\varphi$, the assignment vectors are less correlated, which also means that the estimates of $\tau$ have less correlation. We refer to $\varphi$ as the \emph{assignment correlation} for a design and $\phi$ as the average assignment correlation for a set of designs.

The amount of information available is maxmimized when the assignment correlation is zero (i.e., when $u=N/4$). The key insight from theorem \ref{thm:varvar_u} is that the variance of the MSE of the difference-in-means estimator is linearly increasing in the average assignment correlation, i.e., the average value of $(u-N/4)^2$ over all pairs in all designs in $\tilde{\mathcal{K}}_H$. Theorem \ref{thm:mean_fix} says that the expected MSE from a randomly selected design satisfying condition \ref{cond:cond1} is equal to $\sigma^2_{CR}$. But for the given design actually selected, the MSE is in general something different. With a large average assignment correlation, the risk that the MSE will be much larger than $\sigma^2_{CR}$ is larger than with a small average assignment correlation.

Theorem \ref{thm:varvar_u} also shows that the data, $(\mathbf{Y}(0),\mathbf{Y}(1))$, only enters multiplicatively in one place through $\psi$,\footnote{The explicit form for $\psi$ is given in the supporting information in the proof of theorem \ref{thm:varvar_u}.} which means that the relative variance of the MSE for two different sets of designs is independent of the data. Instead, what determines the relative variances are the values of $H$ and $\phi(\tilde{\mathcal{K}}_H)$ for each set of designs.

It is worth noting that for a fixed $H$, as $N$ increases, the variance of the MSE goes to zero at the rate $N^2$, which means that for large sample sizes, the variability of the MSE should be small. However, as we show in Section \ref{sec:implications}, for smaller experiments such as $N=50$, the variance of the MSE can be substantial. If $H=N_A$ (complete randomization), the variance of the MSE is zero, because $\phi(\mathcal{K})=\phi(\mathcal{K}_{N_A})$.

\subsection{Properties of the assignment correlation}

To build intuition behind how the assignment correlation affects the variance of the MSE of the difference-in-means estimator, Figure \ref{fig:hypo_tau} illustrates treatment effect estimates for two sets of designs, with each set containing two designs. Each design has $H=8$, which means that each design include eight estimates of $\tau$ with four estimates on each side of $\tau$ because mirrors are included. The estimates have been constructed such that the average MSE is the same in the two sets of designs.

In the first set of designs, the average assignment correlation is small which means that a given treatment assignment has about the same pairwise uniqueness with any other treatment assignment of, approximately, $U=N/4$. Two randomly selected assignment vectors from a design with a small assignment correlation should have treatment effect estimates that are close to uncorrelated with each other and the treatment effect estimates for such designs should therefore be close to randomly distributed around $\tau$. With a large assignment correlation on the other hand, a given treatment assignment has either a small or large pairwise uniqueness with the other assignments. The estimates with a small pairwise uniqueness therefore ``bunch together'', whereas those with large pairwise uniqueness tend to be located on opposite sides of $\tau$. The resulting MSE is therefore more variable for the set of designs with comparatively larger assignment correlations.

\begin{figure}
	\begin{tikzpicture}
		\newcommand\ra{5.5}
		\newcommand\rb{4}
		\newcommand\rc{1.5}
		\newcommand\rd{0}
		\node[text width = 3.5cm] at (1.5,\ra + 1) {Small average assignment correlation};	
		\node[text width = 3.5cm] at (1.5,\rc + 1) {Large average assignment correlation};
		\newcommand\taw{4}
		\newcommand\tbw{3.5}
		\newcommand\tcw{2}
		\newcommand\tdw{9}
		\newcommand\tav{1.8}
		\newcommand\tbv{5.4}
		\newcommand\tcv{10}
		\newcommand\tdv{8}
		\newcommand\tay{5.5}
		\newcommand\tby{6}
		\newcommand\tcy{4.8}
		\newcommand\tdy{6.5}
		\newcommand\tax{2}
		\newcommand\tbx{3}
		\newcommand\tcx{3.7}
		\newcommand\tdx{1.5575741}
		\newcommand\tamw{14 - \taw}
		\newcommand\tbmw{14 - \tbw}
		\newcommand\tcmw{14 - \tcw}
		\newcommand\tdmw{14 - \tdw}
		\newcommand\tamv{14 - \tav}
		\newcommand\tbmv{14 - \tbv}
		\newcommand\tcmv{14 - \tcv}
		\newcommand\tdmv{14 - \tdv}
		\newcommand\tamy{14 - \tay}
		\newcommand\tbmy{14 - \tby}
		\newcommand\tcmy{14 - \tcy}
		\newcommand\tdmy{14 - \tdy}
		\newcommand\tamx{14 - \tax}
		\newcommand\tbmx{14 - \tbx}
		\newcommand\tcmx{14 - \tcx}
		\newcommand\tdmx{14 - \tdx}
		\draw[thick,->] (0,\ra) -- (14,\ra);
		\draw[thick] (7,\ra+0.2) -- (7,\ra-0.2) node[below] {$\tau$};
		\draw[thick, color=black] (\taw,\ra+0.2) -- (\taw,\ra-0.2) node[below] {$\hat\tau_1$};
		\draw[thick, color=black] (\tamw,\ra+0.2) -- (\tamw,\ra-0.2) node[below] {$\hat\tau_8$};
		\draw[thick, color=black] (\tbw,\ra+0.2) -- (\tbw,\ra-0.2) node[below] {$\hat\tau_2$};
		\draw[thick, color=black] (\tbmw,\ra+0.2) -- (\tbmw,\ra-0.2) node[below] {$\hat\tau_7$};
		\draw[thick, color=black] (\tcw,\ra+0.2) -- (\tcw,\ra-0.2) node[below] {$\hat\tau_3$};
		\draw[thick, color=black] (\tcmw,\ra+0.2) -- (\tcmw,\ra-0.2) node[below] {$\hat\tau_6$};
		\draw[thick, color=black] (\tdw,\ra+0.2) -- (\tdw,\ra-0.2) node[below] {$\hat\tau_4$};
		\draw[thick, color=black] (\tdmw,\ra+0.2) -- (\tdmw,\ra-0.2) node[below] {$\hat\tau_5$};	
		\draw[thick,->] (0,\rb) -- (14,\rb);
		\draw[thick] (7,\rb+0.2) -- (7,\rb-0.2) node[below] {$\tau$};
		\draw[thick, color=black] (\tav,\rb+0.2) -- (\tav,\rb-0.2) node[below] {$\hat\tau_1$};
		\draw[thick, color=black] (\tamv,\rb+0.2) -- (\tamv,\rb-0.2) node[below] {$\hat\tau_8$};
		\draw[thick, color=black] (\tbv,\rb+0.2) -- (\tbv,\rb-0.2) node[below] {$\hat\tau_2$};
		\draw[thick, color=black] (\tbmv,\rb+0.2) -- (\tbmv,\rb-0.2) node[below] {$\hat\tau_7$};
		\draw[thick, color=black] (\tcv,\rb+0.2) -- (\tcv,\rb-0.2) node[below] {$\hat\tau_3$};
		\draw[thick, color=black] (\tcmv,\rb+0.2) -- (\tcmv,\rb-0.2) node[below] {$\hat\tau_6$};
		\draw[thick, color=black] (\tdv,\rb+0.2) -- (\tdv,\rb-0.2) node[below] {$\hat\tau_4$};
		\draw[thick, color=black] (\tdmv,\rb+0.2) -- (\tdmv,\rb-0.2) node[below] {$\hat\tau_5$};
		\draw[thick,->] (0,\rc) -- (14,\rc);
		\draw[thick] (7,\rc+0.2) -- (7,\rc-0.2) node[below] {$\tau$};
		\draw[thick, color=black] (\tay,\rc+0.2) -- (\tay,\rc-0.2) node[below] {$\hat\tau_1$};
		\draw[thick, color=black] (\tamy,\rc+0.2) -- (\tamy,\rc-0.2) node[below] {$\hat\tau_8$};
		\draw[thick, color=black] (\tby,\rc+0.2) -- (\tby,\rc-0.2) node[below] {$\hat\tau_2$};
		\draw[thick, color=black] (\tbmy,\rc+0.2) -- (\tbmy,\rc-0.2) node[below] {$\hat\tau_7$};
		\draw[thick, color=black] (\tcy,\rc+0.2) -- (\tcy,\rc-0.2) node[below] {$\hat\tau_3$};
		\draw[thick, color=black] (\tcmy,\rc+0.2) -- (\tcmy,\rc-0.2) node[below] {$\hat\tau_6$};
		\draw[thick, color=black] (\tdy,\rc+0.2) -- (\tdy,\rc-0.2) node[below] {$\hat\tau_4$};
		\draw[thick, color=black] (\tdmy,\rc+0.2) -- (\tdmy,\rc-0.2) node[below] {$\hat\tau_5$};
		\draw[thick,->] (0,\rd) -- (14,\rd);
		\draw[thick] (7,\rd+0.2) -- (7,\rd-0.2) node[below] {$\tau$};
		\draw[thick, color=black] (\tax,\rd+0.2) -- (\tax,\rd-0.2) node[below] {$\hat\tau_1$};
		\draw[thick, color=black] (\tamx,\rd+0.2) -- (\tamx,\rd-0.2) node[below] {$\hat\tau_8$};
		\draw[thick, color=black] (\tbx,\rd+0.2) -- (\tbx,\rd-0.2) node[below] {$\hat\tau_2$};
		\draw[thick, color=black] (\tbmx,\rd+0.2) -- (\tbmx,\rd-0.2) node[below] {$\hat\tau_7$};
		\draw[thick, color=black] (\tcx,\rd+0.2) -- (\tcx,\rd-0.2) node[below] {$\hat\tau_3$};
		\draw[thick, color=black] (\tcmx,\rd+0.2) -- (\tcmx,\rd-0.2) node[below] {$\hat\tau_6$};
		\draw[thick, color=black] (\tdx,\rd+0.2) -- (\tdx,\rd-0.2) node[below] {$\hat\tau_4$};
		\draw[thick, color=black] (\tdmx,\rd+0.2) -- (\tdmx,\rd-0.2) node[below] {$\hat\tau_5$};
	\end{tikzpicture}
	\caption{Hypothetical estimates for different designs \label{fig:hypo_tau}}
	\floatfoot{\footnotesize Note: The figure shows hypothetical treatment effect estimates for four different designs with $H=8$ for each design. Because mirrors are included, the estimates for each design are symmetrically distributed with average value equal to $\tau$. The average variance of the first two designs (which have a small hypothetical value of $\phi$) is identical to the average variance of the last two designs (with a large hypothetical value of $\phi$).}
\end{figure}

It is important to note that for a given design, it is straightforward to calculate $\varphi$. To do so, it is sufficient to go through all the pairs of assignment vectors from the first half of the lexicographic ordering and calculate the pairwise uniqueness. In general, the computational complexity of calculating $\varphi$ is $O(H^2)$, which means that for small $H$ it is a trivial calculation, but for large $H$, it might not be computationally feasible to calculate. In such cases, $\varphi$ can instead be Monte Carlo approximated.

By observing the assignment correlation for a design, the experimenter has an easily accessible diagnostic tool which is informative of the risk of getting a design with a large MSE. Behind the veil of ignorance, we can view $\mathcal{W}_H$ as being randomly sampled from the set of all designs with the given values of $\varphi$ and $H$. We denote that set $\mathcal{K}_{H,\varphi}$. That is, it is the case that
\begin{equation}
	\phi(\mathcal{K}_{H,\varphi}) = \varphi(\mathcal{W}_H), \quad \forall \mathcal{W}_H \in \mathcal{K}_{H,\varphi}.
\end{equation}
Behind the veil of ignorance, the experimenter can narrow the set of designs from which the design was sampled from $\mathcal{K}_H$ to $\mathcal{K}_{H,\varphi}$, and thereby make a better assessment of the risk associated with the design.

This risk is generally increasing when using a design which heavily restricts the admissible assignment vectors based on some observed covariates (something we discuss further in Section \ref{sec:implications}). By calculating the assignment correlation, the experimenter can directly observe whether this is the case. If the correlation is very large, one can choose to relax the strict criterion to allow for more diversity in the assignment vectors. On the other hand, if the correlation is not large, one can proceed with a very strict design without risking much in terms of increased variation in the MSE.

It is therefore useful to be able to know what values of the assignment correlation that can be expected. For two randomly chosen assignment vectors (excluding mirrors) it is straightforward to see that the probability mass function for $u$ is
\begin{equation}
	f(u)= \frac{\frac{N_A}{2}\binom{N/2}{u}\binom{N/2}{N/2-u}}{N_A(N_A-2)/2} = \frac{\binom{N/2}{u}\binom{N/2}{N/2-u}}{N_A-2}, \label{eq:pmf_phi}
\end{equation}
where the numerator is the number of ways to select two distinct assignment vectors with a given $u=1,\ldots,N/2-1$ and the denominator is the total number of ways of selecting two assignment vectors that are not mirrors. The expected value of $\varphi$ for any set $\mathcal{K}_H$ is the same regardless of $H$. That is, it is the case that $\phi(\mathcal{K}_H)=\phi(\mathcal{K})$ because all pairs of assignment vectors with pairwise uniqueness $u$ occurs with the same relative frequency (the parameter $\nu_u(\mathcal{K}_H)$) for any $\mathcal{K}_H$. Using equation \eqref{eq:pmf_phi}, it can be shown that
\begin{equation}
	\phi(\mathcal{K}) = \sum_{u=1}^{N/2-1} \frac{\binom{N/2}{u}\binom{N/2}{N/2-u}}{N_A-2}\left(\frac{4}{N}\right)^2(u-N/4)^2 \approx \frac{1}{N-1}, \text{ for } \mathcal{W}_H \in \mathcal{K}_H,\label{eq:exp_phi}
\end{equation}
which follows from the fact that equation \eqref{eq:pmf_phi} is (approximately) a hypergeometric probability mass function.\footnote{We have $\left(\frac{4}{N}\right)^2(u-N/4)^2 = \frac{16}{N^2}u^2 - \frac{8u}{N} + 1$. Furthermore, standard results for the hypergeometric distribution imply $E(u)=N/4$ and $E(u^2) = \frac{N^2}{16} + \frac{N^2}{16(N-1)}$. We get $\phi(\mathcal{K})\approx 1+\frac{1}{N-1}-2+1=\frac{1}{N-1}$. This result is only an approximation because we go from $u=1$ to $u=N/2-1$ instead of from $u=0$ to $u=N/2$, and we divide by $N_A-2$ instead of $N_A$. However, $\phi(\mathcal{K})-\frac{1}{N-1}$ is small even for relatively small $N$ and as $N\rightarrow \infty$, $\phi(\mathcal{K})-\frac{1}{N-1}=0$.}

While the assignment correlation is bounded between zero and one, for a given $H$, it may not be combinatorially possible to reach these boundary values. The exact distribution of the assignment correlation is complicated because of the dependence of the $\binom{H}{2}$ pairs of assignment vectors. 

\section{Implications for the design of experiments \label{sec:implications}}
In this section we use the result in the previous section to discuss the small-sample properties of the MSE. The key to our discussion is the assignment correlation. Different designs have implications for what values the assignment correlation can take and it is therefore helpful to go through some common designs.

\subsection{Block randomization \label{seq:block}} 
The perhaps most common design aside from complete randomization is block randomization. In cases with only one block covariate with two equal-sized groups (such as gender), it can be shown that the assignment correlation for such a design is
\begin{equation}
	\sum_{u=1}^{N/2-1}\frac{\sum_{i=0}^u \binom{N/4}{N/4-i}^2\binom{N/4}{N/4-(u-i)}^2}{\binom{N/2}{N/4}^2-2}\left(\frac{4}{N}\right)^2(u-N/4)^2.
\end{equation}
The assignment correlation converges to $\frac{1}{N-2}$ as $N\rightarrow \infty$ and can never take extreme values. A block design with two blocks is therefore a ``safe'' design in the sense that there are clear limitations to how large the variance of the MSE can be.
 
This result is for a block design with the fewest possible number of blocks: two. On the other end of the spectrum, there can be a block design with $N/2$ blocks with two units in each block. In that case the assignment correlation for the set of all such designs equals
\begin{equation}
	\sum_{u=1}^{N/2-1}\frac{\binom{N/2}{u}}{2^{N/2}-2}\left(\frac{4}{N}\right)^2(u-N/4)^2.
\end{equation}
In this case, the assignment correlation converges to $\frac{1}{N-N/2}=\frac{2}{N}$ as $N \rightarrow \infty$. More generally, for any block design with $N/b$ blocks with exactly $b$ units in each block, the assignment correlation converges to $\frac{1}{N-N/b}=$ as $N \rightarrow \infty$.

Overall, any given block design implies a specific assignment correlation and, different from the strategies discussed below, cannot take on extreme values.

\subsection{Rerandomization}
\cite{Morgan2012} formalize the idea of a \emph{rerandomization design} as a restricted design where only assignment vectors which satsifies a balance criterion based on some observed covariates, such as the Mahalanobis distance, are part of the design. Let $M(\mathbf{X},{\mathbf{w}})$ be the Mahalanobis distance of the mean differences of covariates $\mathbf{X}$ of treated and controls given assignment vector $\mathbf{w}$. Define a threshold value $a$ where $\Pr(M(\mathbf{X},{\mathbf{w}}) \leq a)=p_A$. The set of all admissible assignment vectors in a rerandomization design is
\begin{equation}
  \mathcal{W}_H^{RR} = \{\mathbf{w} \in \mathcal{W}: M(\mathbf{X},{\mathbf{w}}) \leq a \},
\end{equation}
where $H=N_A \cdot p_A$.

As pointed out in \cite{Schultzberg2019c}, block designs can be seen as a special case of rerandomization where the observed covariates on which to rerandomize are categorical. However, with rerandomization on continuous covariates, we can no longer be certain that the assignment correlation will stay reasonably close to $\frac{1}{N-1}$, as was the case for block randomization. Instead, the assignment correlation will depend on the distribution of the covariates.

If covariates are informative in explaining the outcome, a small $p_A$ is desirable. At the same time, when choosing $p_A$, \cite{Morgan2012} argue that care should be taken such that $p_A$ is large enough for randomization inference to be possible, whereas \cite{Li2018} warns against setting $p_A$ too small and suggest that $p_A=0.001$ is reasonable. They conclude that ``How to choose $p_A$ is an open problem'' (p. 9162).

In the framework introduced in this paper, with the help of the assignment correlation, it is possible to guide the experimenter in the choice of $p_A$. Essentially, the experimenter can keep rerandomizing, gradually lowering $p_A$, until the assignment correlation starts to increase. We elaborate on this point further in the concluding discussion.

\subsection{Algorithmic designs}
Rerandomization provides a straightforward and intuitive way of finding balanced assignment vectors for the experiment. However, the number of expected rerandomizations to ensure balance in the covariates increases exponentially with the number of covariates. That means that even with only a handful of covariates, it is computationally infeasible to rerandomize if a strict balance criterion is chosen.

In response to this issue, a number of different algorithms have been suggested which more efficiently find balanced assignment vectors, making it possible to balance a larger number of covariates (see, for instance, \citealt{Lauretto2017}, \citealt{Kallus2018} and \citealt{Krieger2019}). At the same time, for certain type of data, especially containing outliers, there may only be a few ways that treatment can be assigned which enforces balance. In that case, the assignment vectors from using algorithms may be very similar to each other, compromising the randomness of the design. With the assignment correlation, this is something that is directly observable to the experimenter. We illustrate this point below using the pair-switching algorithm of \cite{Krieger2019}.

\subsection{Simulation of assignment correlations}
For block randomization with equal-sized blocks, each design imply a specific assignment correlation that (asymptotically) is bounded between $\frac{1}{N-2}$ and $\frac{1}{N-N/2}$ (see Section \ref{seq:block}). For rerandomization and the algorithmic designs, there is no way to analytically derive the values of the assignment correlation; rather, they depend on the data. Instead, we perform a simple simulation study to analyze what values the assignment correlation can take, where we study different rerandomization strategies, as well as the pair-switching algorithm of \cite{Krieger2019}.

We let the sample size be $N=50$ and use five standard normal covariates to balance on. We consider rerandomization with $p_A=0.1, 0.01, 0.001$. For these designs, $H=p_A \cdot N_A$ which makes it computationally infeasible to exactly calculate the assignment correlation. Instead, we Monte Carlo approximate the assignment correlation by randomly sampling 5,000 assignment vectors from the first half of the lexicographic ordering (equivalent to 10,000 assignment vectors when mirrors are added). We also calculate the assignment correlation for 5,000 randomly selected assignment vectors from the first half of the lexicographic ordering (corresponding to $p_A=1$, i.e., complete randomization). For that case, equation \eqref{eq:exp_phi} tells us the analytical solution and we can therefore see the sampling error due to the Monte Carlo approximation. Finally, for the pair-switching algorithm, we randomly sample 5,000 starting vectors on which the algorithm is applied which give 5,000 new assignment vectors. For each strategy, no duplicated assignment vector is allowed: if a duplicate is found, it is discarded and a new vector is drawn.

\begin{figure}[p!]
     \includegraphics[width=\linewidth]{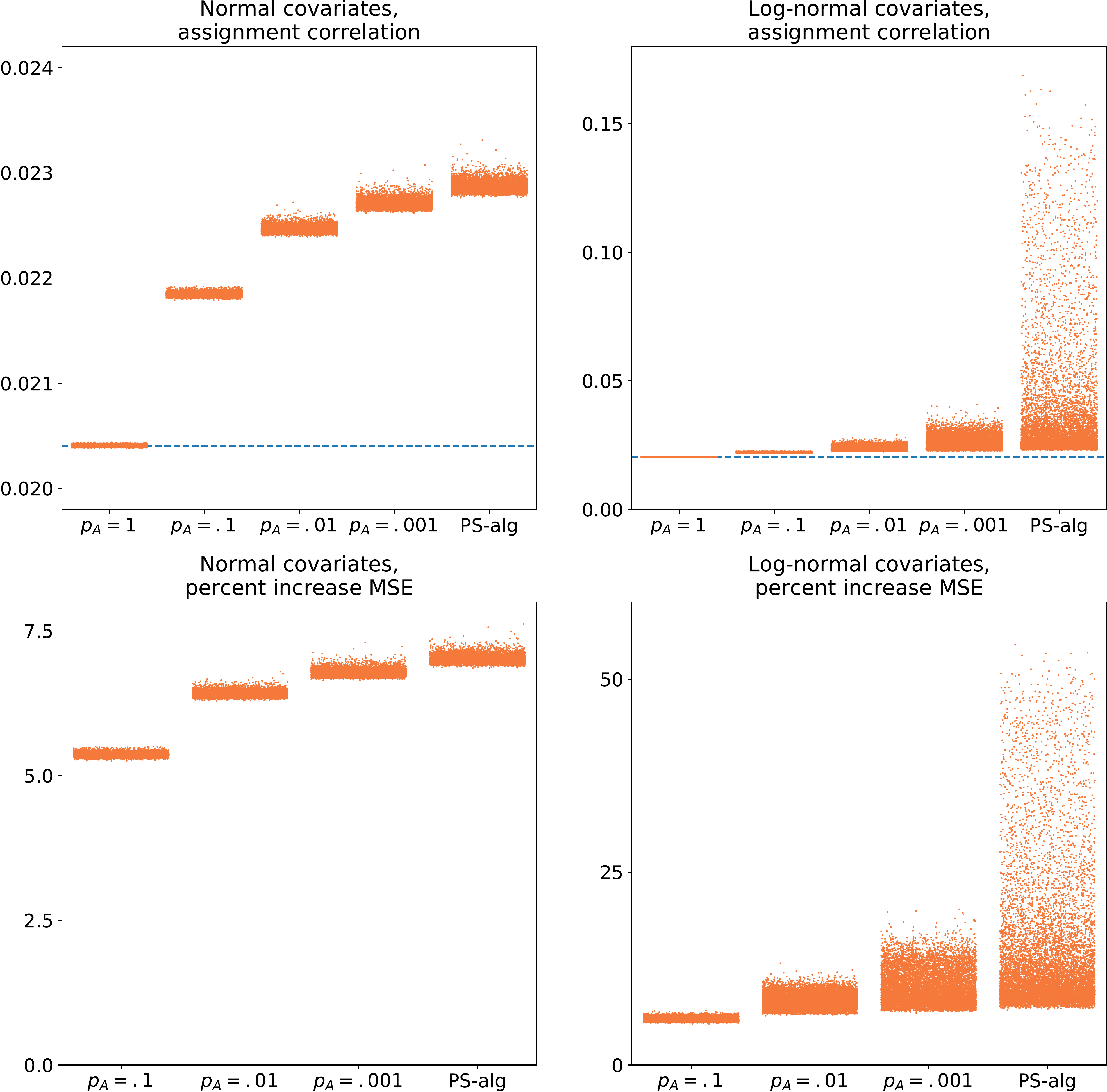}
     \caption{Distribution of assignment correlation for different designs \label{fig:dist_as_cor}}
     \floatfoot{Note: The top panel shows the distribution of the assignment correlation for different designs on normal (left graph) and lognormal (right graph) covariates with the dashed horizontal line indicating the assignment correlation for complete randomization. The bottom panel shows the corresponding MSE increase from a standard deviation increase relative to its expectation (see equation \ref{eq:rel_var_increase}). $N=50$ and for each sample and the assignment correlations are approximated with 10,000 assignment vectors (including mirrors). 10,000 random samples are drawn. Note that the scale on the $y$-axis is different for each graph.}
\end{figure}

The top left graph of Figure \ref{fig:dist_as_cor} plots the estimated assignment correlations for 10,000 different samples with Table \ref{tab:dist_as_cor} providing some key statistics from the simulation. The dashed horizontal line in the figure indicates the value of the assignment correlation under complete randomization according to equation \eqref{eq:exp_phi}. Beginning from the left, we see that for complete randomization ($p_A=1$), the Monte Carlo approximated values are very close to the true value, suggesting that 5,000 assignment vectors is sufficient to get a good estimate of the assignment correlation.

Turning to the rerandomization designs, we see that for these designs, the assignment correlation becomes larger the more restrictive the design is, but even for a fairly unrestrictive rerandomization design ($p_A=0.1$), there is a clear difference to complete randomization. The spread of the estimated assignment correlations is larger compared to complete randomization because they depend on the data: A design with $p_A=0.1$ implies a different assignment correlation for each sample, which is different for complete randomization when the assignment correlation is fixed. Finally, the pair-switching algorithm gives the largest assignment correlation. This result is expected as it is the most restrictive design with the smallest associated Mahalanobis distances.\footnote{The maximum Mahalanobis distance for each design is, on average over the 10,000 replications, 21.0 ($p_A=1$), 1.70 ($p_A=0.1$), 0.59 ($p_A=0.01$), 0.23 ($p_A=0.001$) and 0.11 (pair-switching algorithm).}

The result shows that more restrictive designs also have higher assignment correlations, but from the top left graph alone it is not possible to quantify how important this difference is. To do so, we can use equation \eqref{eq:varvar_in_sample} from theorem \ref{thm:varvar_u} which give the explicit formula for the variance of the MSE of the difference-in-means estimator. Using this information, we can quantify how variable the MSE is depending on the observed assignment correlation. If the variance of the MSE is very small compared to the MSE itself, it suggests that there is little need to worry about the small-sample behavior. If, on the other hand, there is a large degree of variability, then it becomes important to pay attention to the assignment correlation in order to guide the design choice. For instance, in the case of rerandomization, \cite{Li2018} derive the asymptotic distribution of the difference-in-means estimator which can be used to estimate the MSE. However, if the MSE varies substantially depending on the design, such an asymptotic estimator may be inappropriate as large-sample approximations are unlikely to hold.

In all the designs we study here, $H$ is a huge number. Even the restrictive pair-switching algorithm is in most cases likely to contain millions of assignment vectors when $N=50$. For large $H$, equation \eqref{eq:varvar_in_sample} simplifies to
  \begin{equation}
    \var\left(\mse(\hat\tau_{\{\mathcal{W}_H\}}):\mathcal{W}_H\in \mathcal{\tilde{K}}_H\right) = \frac{4}{N^2}\psi\left(\left( \phi(\tilde{\mathcal{K}}_H) - \frac{1}{N-1}  \right) \right),
  \end{equation}
where $\phi(\mathcal{K})$ has been replaced with $\frac{1}{N-1}$. To quantify the variability of the MSE, we will have to assume some value of $\psi$ and the expected MSE, which by theorem \ref{thm:mean_fix} equals $\sigma^2_{CR}$. If $Y(0)\sim N(0,1)$ and treatment effects are homogeneous, then the expected values for these parameters are $\psi=8$ and $\sigma^2_{CR}=\frac{4}{N}$. With this information, we can use equation \eqref{eq:varvar_in_sample} to get the variance of the MSE for different values of the assignment correlation for the set of designs $\tilde{\mathcal{K}}_{H,\varphi}$, i.e., $\phi(\tilde{\mathcal{K}}_{H,\varphi})$. Because it is relevant to compare the variance of the MSE to the expected MSE, our measure of variability is
\begin{equation}
  100 \cdot \frac{\sqrt{\var\left(\mse(\hat\tau_{\{\mathcal{W}_H\}}):\mathcal{W}_H\in \mathcal{\tilde{K}}_{H,\varphi}\right)}}{\sigma^2_{CR}} = 100 \cdot \sqrt{2\left( \phi(\tilde{\mathcal{K}}_H) - \frac{1}{N-1}  \right)}. \label{eq:rel_var_increase}
\end{equation}

That is, this measure tells us how many percent the MSE increases from a standard deviation increase relative to its expectation. The bottom left graph of Figure \ref{fig:dist_as_cor} takes the results for the assignment correlation from the top left graph as input in equation \eqref{eq:rel_var_increase} and plots the result. No result is shown for $p_A=1$ because we know that the variance of the MSE is zero for complete randomization.

We see that the average assignment correlation for $p_A=0.1$ implies that a standard deviation increase in the MSE relative to its expectation would increase the MSE with around 5.4 percent (see Table \ref{tab:dist_as_cor}). Due to the central limit theorem and the fact that the covariates are normal, we should expect the distribution of MSE to be approximately normal, which implies that 95 percent of designs with the given assignment correlation have a MSE $\pm11$ percent of expected MSE. For more restrictive designs, the variability of the MSE is higher, but because a standard deviation increase relative to its expectation increases with the square root of the assignment correlation, the additional increase in variation is modest: for the pair-switching design, the average assignment correlation imply that a standard deviation increase in the MSE relative to its expectation would increase the MSE with around 7 percent.

Overall, we see that for normal covariates, all designs imply a reasonably large variability of the MSE (for this relatively small sample size), but the variance increase from using more restrictive designs is quite small. Hence there seems to be little risk associated with very restrictive designs.

However, in this simulation study, the covariates were well-behaved whereas real data may contain covariates which are skewed and contain outliers. We therefore perform the same simulation study, but where covariates instead follow a log-normal distribution with the distribution of the assignment correlation being shown in the upper right graph of Figure \ref{fig:dist_as_cor} (with key statistics in Table \ref{tab:dist_as_cor}).

\begin{table}[p!]
\begin{threeparttable}
    \singlespace
    \begin{tabular*}{\textwidth}{@{\hskip\tabcolsep\extracolsep\fill}l*{1}{cccccccc}}\toprule
 & & & & \multicolumn{4}{c}{Quantiles}\\
\cmidrule(lr){5-8}& Mean & Min & Max & 0.5 & 0.75 & 0.975 & 0.999\\
\midrule\addlinespace\multicolumn{8}{l}{\textit{Five standard normal covariates}}\\
\addlinespace$p_A=1$ & 0.0204 & 0.0204 & 0.0204 & 0.0204 & 0.0204 & 0.0204 & 0.0204 \\
& & & & & & & \\ 
$p_A=0.1$ & 0.0218 & 0.0218 & 0.0219 & 0.0218 & 0.0219 & 0.0219 & 0.0219 \\
 & (5.4) & (5.3) & (5.5) & (5.4) & (5.4) & (5.4) & (5.5) \\
\addlinespace
$p_A=0.01$ & 0.0225 & 0.0224 & 0.0227 & 0.0225 & 0.0225 & 0.0225 & 0.0226 \\
 & (6.4) & (6.3) & (6.8) & (6.4) & (6.4) & (6.5) & (6.6) \\
\addlinespace
$p_A=0.001$ & 0.0227 & 0.0226 & 0.0231 & 0.0227 & 0.0227 & 0.0228 & 0.0229 \\
 & (6.8) & (6.6) & (7.3) & (6.8) & (6.8) & (6.9) & (7.1) \\
\addlinespace
PS-alg & 0.0229 & 0.0228 & 0.0233 & 0.0229 & 0.0229 & 0.0229 & 0.0231 \\
 & (7.0) & (6.9) & (7.6) & (7.0) & (7.0) & (7.1) & (7.3) \\
\addlinespace
\addlinespace\multicolumn{8}{l}{\textit{Five log-normal covariates}}\\
\addlinespace$p_A=1$ & 0.0204 & 0.0204 & 0.0204 & 0.0204 & 0.0204 & 0.0204 & 0.0204 \\
& & & & & & & \\ 
$p_A=0.1$ & 0.0222 & 0.0218 & 0.0229 & 0.0222 & 0.0223 & 0.0224 & 0.0227 \\
 & (6.0) & (5.3) & (7.1) & (6.0) & (6.1) & (6.4) & (6.8) \\
\addlinespace
$p_A=0.01$ & 0.0239 & 0.0224 & 0.0291 & 0.0238 & 0.0245 & 0.0254 & 0.0272 \\
 & (8.3) & (6.4) & (13.2) & (8.2) & (9.0) & (10.0) & (11.6) \\
\addlinespace
$p_A=0.001$ & 0.0258 & 0.0227 & 0.0408 & 0.0250 & 0.0273 & 0.0307 & 0.0373 \\
 & (10.2) & (6.7) & (20.2) & (9.6) & (11.7) & (14.3) & (18.4) \\
\addlinespace
PS-alg & 0.0341 & 0.0229 & 0.1687 & 0.0262 & 0.0323 & 0.0820 & 0.1516 \\
 & (14.2) & (7.1) & (54.5) & (10.8) & (15.4) & (35.1) & (51.2) \\
\addlinespace
\bottomrule\end{tabular*}
    \caption{Distribution of assignment correlation for different designs \label{tab:dist_as_cor}}
    \begin{tablenotes}
      \item[] \footnotesize Note: The table presents data from the same simulated distributions of the assignment correlation as is shown in Figure \ref{fig:dist_as_cor}. The corresponding MSE increase from a standard deviation increase relative to its expectation (see equation \ref{eq:rel_var_increase}) is shown in parentheses. $N=50$ and for each sample, the assignment correlations are approximated with 10,000 assignment vectors (including mirrors). 10,000 random samples are drawn.
    \end{tablenotes}
\end{threeparttable}
\end{table}

We now see quite a different picture where more restrictive designs having a relatively larger assignment correlation on average, but where the distribution also has a large right tail. Indeed, for the pair-switching algorithm, the assignment correlation can become quite extreme with values up to eight times as large as the expectation and the percent increase in the MSE from a standard deviation increase relative to its expectation can now be as large as 50 percent for the most extreme designs.

The reason for this extreme outcome is that data contain outliers. In such cases there can be situations where only a small set of very similar assignment vectors can yield small Mahalanobis distances. Because all assignment vectors in such a design are similar to each other, the variability of the MSE becomes huge, and using such a design comes with an inherent risk.

Because the variance of the MSE converges to zero at the rate $N^2$, the risks discussed in this paper are mainly relevant for smaller experiments. In the supporting information, we reproduce the results in Figure \ref{fig:dist_as_cor} and Table \ref{tab:dist_as_cor} for $N=100$. While the assignment correlations are smaller on average compared to $N=50$, there are situations when the assignment correlations take such extreme values that a standard deviation increase in the MSE relative to its expectation can be just as large as for the smaller sample size.

\section{Discussion \label{sec:discussion}}
The theoretical results in Section \ref{sec:theory} says that the least risky design is complete randomization, with more restrictive designs carrying a greater risk of a high MSE. Simulation results indicate that with designs that balances on well-behaved covariates---and where the designs are not too restrictive---this risk is fairly small. However, when covariates include outliers, there is a real chance of having designs that consist of assignment vectors which are very similar to each other, implying that the variance of the MSE is high.

The theoretical results are valid under the veil of ignorance where covariates carry no information about the outcome, with the expected MSE being the same for all designs  (theorem \ref{thm:mean_fix}). In reality, restricted designs such as rerandomization is used when covariates are expected to be informative in explaining the outcome, thereby decreasing the MSE. It is therefore natural to ask what relevance the preceding analysis holds in applied settings.

There are several reasons for why we consider our approach to be relevant. First, before an experiment has taken place, it is in general unknown whether covariates explain the outcome and researchers may be inclined to include covariates in designs even when there are no strong \emph{a priori} reason for their inclusion. In line with, e.g., \cite{efron_forcing_1971} and \cite{Wu1981}, it is therefore relevant to study the risk of different designs with no prior information. In cases when covariates are in fact uninformative, we can view a restricted design based on these covariates as being randomly sampled from behind the veil of ignorance. Our results therefore inform the risk of restricting a design when there is little evidence of covariates being important.

Second, even when covariates can explain the outcome, in many cases the question arise of just \emph{how restricted} a design should be, with no satisfactory answer given in the literature (\citealt{Li2018}) and with modern algorithmic designs, it is possible for designs to include extremely few assignment vectors (see \citealt{Johansson_on_2019} and \citealt{kallus_on_2020}). We consider the assignment correlation to be useful in dealing with the question of how restricted a design should be. For instance, at a certain point, further restricting a design would have negligible impact on the expected MSE, but may increase the assignment correlation. In such a case, there is a potential tradeoff beteween continuously restricting a design, with an incrementally smaller expected MSE, and an increased risk due to all admissible assignment vectors being very similar to each other.

With rerandomization, an algorithmic way of deciding $p_A$ (i.e., just how restricted the design should be) would be to set a threshold value for the largest admissible assignment vector. One can then continuously rerandomize, gradually lowering $p_A$ and calculate the assignment vector. $p_A$ is then chosen based on when the assignment correlation exceeds the threshold value or when the alloted time for rerandomization runs out. In this way, the experimenter can get as low expected MSE as possible, while at the same time feel confident that the resulting design is not too narrow and retains sufficient variation in the assignments. Similar algorithms can be used for various algorithmic designs.

Simulation results indicate that with well-behaved covariates, the tradeoff between lowering expected MSE and an increased assignment correlation is limited and with current computational capacity, the algorithm suggested in the previous paragraph would in most cases terminate when the alloted time is up, rather than due to the assignment correlation being large. However, with increasing computational capacity, and with more clever algorithms for efficiently searching through the space of assignment vectors, this result may change in the near future, making it even more relevant to use a measure such as the assignment correlation to assure sufficient degree of variation is left in the resulting assignment vectors.

If, on the other hand, the assignment correlation increases very quickly when using the algorithm discussed above, it is a sign that the design is being driven by outliers, forcing a certain configuration of treated and control to ensure balance. In such a case, the experimenter may consider trimming or excluding offending covariates.

Finally, the assignment correlation measures the risk of getting a large MSE for a given design. In most cases, it seems likely that the experimenter is risk-averse, meaning that a large assignment correlation should be avoided. A risk-neutral experimenter on the other hand would be indifferent between different assignment correlations whereas if he or she is risk-loving, a large assignment correlation may even be preferable. It is also possible that the experimenter's risk preference depends on whether the study is well-powered or not (see the recent work of \citealt{krieger_improving_2020} for an analysis of the relationship between power of randomization tests and correlation between assignment vectors). A complete analysis of this issue would be well-studied in a decision-theoretic framework, something that is beyond the scope of this paper, but which would be an interesting avenue for future research.

\bibliographystyle{apalike}
\bibliography{library}

\appendix

\clearpage

\begin{center}
{\large\bf SUPPLEMENTARY MATERIAL}
\end{center}

\setcounter{equation}{0}
\renewcommand{\theequation}{A\arabic{equation}}

\newtheorem{theoremA}{Theorem}
\renewcommand{\thetheoremA}{A\arabic{theoremA}}

\addcontentsline{toc}{section}{Appendices}
\renewcommand{\thesubsection}{\Alph{subsection}}

\setcounter{table}{0}
\renewcommand{\thetable}{A\arabic{table}}

\setcounter{figure}{0}
\renewcommand{\thefigure}{A\arabic{figure}}

\subsection*{Proof of theorem \ref{thm:mean_fix}}
The average MSE over all designs in $\tilde{\mathcal{K}}_H$ can be written as
    \begin{equation}
        \frac{1}{m}\sum_{k=1}^{m}\mse(\hat\tau_{\{\mathcal{\tilde W}_H^k\}}) =
        \frac{1}{m}\sum_{k=1}^{m}\left( \frac{1}{H}\sum_{\mathbf{w}_j\in \tilde{\mathcal{W}}_H^k} (\hat\tau_j - \tau)^2 \right). \label{eq:avg_var}
    \end{equation}
    Let $I_{jk}:=1[\mathbf{w}_j\in \tilde{\mathcal{W}}_H^k]$ be an indicator for whether assignment vector $\mathbf{w}_j$ occurs in design $\tilde{\mathcal{W}}_H^k$. We can rewrite equation \eqref{eq:avg_var} as
    \begin{equation}
        \frac{1}{m}\sum_{k=1}^{m}\mse(\hat\tau_{\{\mathcal{\tilde W}_H^k\}}) =
        \frac{1}{m}\sum_{k=1}^{m}\sum_{j=1}^{N_A}\frac{I_{jk}}{H} (\hat\tau_j - \tau)^2
    \end{equation}
    Together with equation \eqref{eq:var_cr} in the paper, we see that this expression equals $\sigma^2_{CR}$ if
    \begin{equation}
        \sum_{k=1}^{m}\sum_{j=1}^{N_A}\frac{I_{jk}}{H}(\hat\tau_j - \tau)^2 = \sum_{j=1}^{N_A}\frac{m}{N_A}(\hat\tau_j - \tau)^2.
    \end{equation}
    By condition \ref{cond:cond1} we know that $\sum_{k=1}^m I_{jk}=c$, i.e. that the total number of times an assignment vector occurs in the designs in $\tilde{\mathcal{K}}_H$ is the same for every assignment vector. We have
    \begin{equation}
        \sum_{j=1}^{N_A} c(\hat\tau_j - \tau)^2 = \sum_{j=1}^{N_A} \frac{m H}{N_A}(\hat\tau_j - \tau)^2. \label{eq:exp_var_cr}
    \end{equation}
    Because there are $m$ designs in $\tilde{\mathcal{K}}_H$ with $H$ assignment vectors in each design, there are $mH$ total number of assignments vectors over all designs in $\tilde{\mathcal{K}}_H$. There are $N_A$ different assignment vectors and so the number of times each assignment vector occurs over all designs is $c=\frac{mH}{N_A}$ which completes the proof.

\subsection*{Proof of theorem \ref{thm:max_fix}}
Let $r_j:=(\hat\tau_j - \tau)^2$. Because we always include mirror assignments in any design under consideration, it is the case that $r_j=r_{N_A+1-j}$ if estimates are ordered lexicographically. Without loss of generality, we can therefore study only the first half of the lexicographic ordering. The order statistics for these estimates are: $o_{(1)},\ldots,o_{(N_A/2)}$ (i.e., it is the ordering of $r_1,\ldots,r_{N_A/2}$). It is the case that
    \begin{equation}
        \max \left\{\mse(\hat\tau_{\{\mathcal{W}_H^k\}}) : k = 1,\ldots,\binom{N_A/2}{H/2}\right\} = \frac{1}{H/2} \sum_{j=1}^{H/2} o_{(N_A/2+1-j)}.
    \end{equation}
    We are interested in the difference in maximum MSE between $\mathcal{K}_H$ and $\mathcal{K}_{H'}$:
    \begin{equation}
        \frac{1}{H/2} \sum_{j=1}^{H/2} o_{(N_A/2+1-j)} - \frac{1}{H'/2} \sum_{j=1}^{H'/2} o_{(N_A/2+1-j)}.
    \end{equation}
    From the definition of order statistics, it must be the case that
    \begin{equation}
        o_{(N_A/2+1-v)} \leq \frac{1}{H/2} \sum_{j=1}^{H/2} o_{(N_A/2+1-j)}, \quad \forall v = H+1\ldots H'.
    \end{equation}
    If $o_{(N_A/2+1-H')} < \frac{1}{H/2} \sum_{j=1}^{H/2} o_{(N_A/2+1-j)}$, then the inequality in the theorem is strict. The exact same argument can be used to show that the minimum MSE is increasing in $H$.
\subsection*{Proof of theorem \ref{thm:varvar_h}}
Let $m=\binom{N_A/2}{H/2}$, we can expand equation \eqref{eq:def_var_mse} in the paper to
\begin{multline}
    \var\left(\mse(\hat\tau_{\{\mathcal{W}_H\}}):\mathcal{W}_H\in \mathcal{K}_H\right)=\\\frac{1}{m} \sum_{k=1}^{m} \left(
    \left(\sigma^2_{CR}\right)^2
    -\frac{2}{H}\sum_{\mathbf{w}_j \in \mathcal{W}_{H}^k}(\hat\tau_j-\tau)^2\sigma^2_{CR}
    +\frac{1}{H^2} \sum_{\mathbf{w}_j \in \mathcal{W}_{H}^k}\sum_{\mathbf{w}_{j'} \in \mathcal{W}_{H}^k} (\hat\tau_j-\tau)^2(\hat\tau_{j'}-\tau)^2 \right).\label{eq:varvar_kh} 
\end{multline}
By condition \ref{cond:cond1}, we know that each assignment vector will occur in $\frac{mH}{N_A}$ designs. Using this information together with the mirror-property, $\hat\tau_j - \tau = \hat\tau_{N_A+1-j} + \tau$, and once again using the notation $r_j=(\hat\tau_j - \tau)^2$, we get
\begin{multline}
    \var\left(\mse(\hat\tau_{\{\mathcal{W}_H\}}):\mathcal{W}_H\in \mathcal{K}_H\right)=\left(\sigma^2_{CR}\right)^2 - 2\frac{\sigma^2_{CR}}{N_A} \sum_{j=1}^{N_A} r_j + \\
     \frac{2}{HN_A}\sum_{j=1}^{N_A} r_j^2 + \frac{H-2}{HN_A(N_A-2)} \sum_{j=1}^{N_A}\sum_{j' \neq j, -j}^{N_A} r_jr_{j'},\label{eq:varvarkh}
\end{multline}
where $\sum_{j' \neq j, -j}^{N_A}$ is shorthand notation for going through all assignment vectors except for $j$ and its mirror, $-j$. By equation \eqref{eq:var_cr} in the paper, we know that 

\begin{equation}
    \left(\sigma^2_{CR}\right)^2 - 2\frac{\sigma^2_{CR}}{N_A} \sum_{j=1}^{N_A} r_j = - \frac{1}{N_A^2} \left(\sum_{j=1}^{N_A} r_j\right)^2 = -\frac{2}{N_A^2}\sum_{j=1}^{N_A}r_j^2 - \frac{1}{N_A^2} \sum_{i=1}^{N_A} \sum_{j' \neq j, -j}^{N_A} r_jr_{j'}, \label{eq:sigma2cr}
\end{equation}
where we have used the fact that $r_jr_{N_A+1-j}=r_j^2$.

Let $\bar{p}:=\frac{1}{N_A}\sum_{j=1}^{N_A}r_j^2$ and $\bar{q}:=\frac{1}{N_A(N_A-2)}\sum_{j=1}^{N_A}  \sum_{j'\neq j, -j}^{N_A}  r_jr_{j'}$ be the average value of $r_j^2$ and $r_jr_{j'}$, respectively. Combining equations \eqref{eq:varvarkh} and \eqref{eq:sigma2cr}, we get

\begin{equation}
    \var\left(\mse(\hat\tau_{\{\mathcal{W}_H\}}):\mathcal{W}_H\in \mathcal{K}_H\right) = 2\frac{ N_A - H}{HN_A}(\bar{p} -\bar{q}). \label{eq:varvar_h}
\end{equation}
Because $\bar{p} -\bar{q} \geq 0$, the proof is completed. Note that if for some $j\neq j'$ it is the case that $\hat\tau_j\neq \hat\tau_{j'}$, then $\bar{p} -\bar{q} > 0$ and so the variance of the MSE is strictly decreasing in $H$.

\subsection*{Proof of theorem \ref{thm:varvar_u}}
From equation \eqref{eq:varvar_kh}, we get
\begin{multline}
    \var\left(\mse(\hat\tau_{\{\mathcal{W}_H\}}):\mathcal{W}_H\in \mathcal{\tilde K}_H\right)=\\\frac{1}{m} \sum_{k=1}^{m} \left(
    \left(\sigma^2_{CR}\right)^2
    -\frac{2}{H}\sum_{\mathbf{w}_a \in \mathcal{\tilde{W}}_{H}^k}(\hat\tau_a-\tau)^2\sigma^2_{CR}
    +\frac{1}{H^2} \sum_{\mathbf{w}_a \in \mathcal{\tilde{W}}_{H}^k}\sum_{\mathbf{w}_{b}  \in \mathcal{\tilde{W}}_{H}^k} (\hat\tau_a-\tau)^2(\hat\tau_{b}-\tau)^2 \right),
\end{multline}
where $m=|\mathcal{\tilde K}_H|$ is the number of designs in the set $\mathcal{\tilde K}_H$. Note that we switch to using $a$ and $b$ as indexes instead of $j$ and $j'$ because we will use the $j$-index to sum over observations below. For each design there are $H^2$ pairs and so the proportion of all pairs of assignment vectors which has a given $u$ over all the designs in the set $\mathcal{\tilde K}_H$ is
\begin{equation}
    v_u(\mathcal{\tilde{K}}_H) := \frac{1}{mH^2}\sum_{k=1}^{m}\sum_{\mathbf{w}_a \in \mathcal{\tilde{W}}_{H}^k}\sum_{\mathbf{w}_{b}  \in \mathcal{\tilde{W}}_{H}^k}  1[U_{a,b}=u]
\end{equation}
Let $n_u:=N_A\binom{N/2}{u}\binom{N/2}{N/2-u}$ be the total number of pairs with a given $u$ out of the $N_A^2$ possible pairs.

Once again, we use the notation $r_a = (\hat\tau_a-\tau)^2$. By condition \ref{cond:cond1}, we know that each assignment vector will occur in $\frac{H}{N_A}m$ designs, and by condition \ref{cond:cond2} we know that two assignment vectors will occur together in a design an equal number of times for all pairs of assignment vectors with the same pairwise uniqueness. Together with equation \eqref{eq:sigma2cr}, we get
\begin{equation}
    \var\left(\mse(\hat\tau_{\{\mathcal{W}_H\}}):\mathcal{W}_H\in \mathcal{\tilde K}_H\right)= \sum_{u=0}^{N/2} \frac{v_u(\mathcal{\tilde{K}}_H)}{n_u} \sum_{a=1}^{N_A}  \sum_{b=1}^{N_A} 1[U_{a,b}=u] r_ar_{b} -\frac{1}{N_A^2} \sum_{a=1}^{N_A}  \sum_{b=1}^{N_A}  r_ar_{b}.
\end{equation}
Let $\bar{q}_u:=\frac{1}{n_u}\sum_{a=1}^{N_A}  \sum_{b=1}^{N_A}  1[U_{a,b}=u] r_ar_{b}$. We get
\begin{equation}
    \var\left(\mse(\hat\tau_{\{\mathcal{W}_H\}}):\mathcal{W}_H\in \mathcal{\tilde K}_H\right)=
    \sum_{u=0}^{N/2}\left(v_u(\mathcal{\tilde{ K}}_H) - \frac{n_u}{N_A^2}\right)\bar{q}_u \label{eq:rel_share_mse}
\end{equation}
The following lemma will be useful:
\begin{lemma}\label{lem:bar_qu}
     $\bar{q}_u$ is a function of $u$ for which it is the case that
     \begin{enumerate}[i]
          \item \label{lem:part_a} the variance of the MSE is independent of any terms in $\bar{q}_u$ that are either constant terms or linear in $u$. 
          \item \label{lem:part_b} if $\bar{q}_u = k_0 + k_1u + k_2u^2$, then the variance of the MSE can be written as 
          \begin{equation}
              \var\left(\mse(\hat\tau_{\{\mathcal{W}_H\}}):\mathcal{W}_H\in \mathcal{\tilde K}_H\right)= k_2\sum_{u=0}^{N/2} \left(v_u(\mathcal{\tilde{ K}}_H) - \frac{n_u}{N_A^2}\right)(u-N/4)^2. 
          \end{equation}
    \end{enumerate} 
\end{lemma}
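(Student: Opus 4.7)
The starting point is equation \eqref{eq:rel_share_mse}, which expresses the variance of the MSE as the linear functional $\sum_{u=0}^{N/2} (v_u(\tilde{\mathcal{K}}_H) - n_u/N_A^2)\,\bar{q}_u$ of $\bar{q}_u$. The plan is to show that the weights $w_u := v_u(\tilde{\mathcal{K}}_H) - n_u/N_A^2$ annihilate every polynomial in $u$ of degree at most one, and then to read off part (ii) directly by decomposing the quadratic.

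For part (i), I would verify two cancellations. First, since $v_u$ is by construction a proportion over all pairs in all designs, $\sum_u v_u(\tilde{\mathcal{K}}_H) = 1$; and since $n_u$ counts the pairs of assignment vectors of uniqueness $u$ among the $N_A^2$ ordered pairs, $\sum_u n_u/N_A^2 = 1$. Hence $\sum_u w_u = 0$, which kills any constant contribution to $\bar{q}_u$. Second, both $v_u$ and $n_u$ are symmetric around $u = N/4$. The key observation is that for any ordered pair $(\mathbf{w}_a,\mathbf{w}_b)$ with $U_{a,b}=u$, the pair $(\mathbf{w}_a,\mathbf{1}-\mathbf{w}_b)$ has uniqueness $N/2-u$, because flipping $\mathbf{w}_b$ turns the ``$a$-treated, $b$-control'' positions into ``$a$-treated, $b$-treated'' positions. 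The mirror-property built into the definition of a design guarantees that $\mathbf{1}-\mathbf{w}_b$ lies in $\tilde{\mathcal{W}}_H^k$ whenever $\mathbf{w}_b$ does, so $b\mapsto -b$ is a bijection on $\tilde{\mathcal{W}}_H^k$ that swaps pairs of uniqueness $u$ with pairs of uniqueness $N/2-u$. This yields $v_u = v_{N/2-u}$, and the same bijection applied globally gives $n_u = n_{N/2-u}$, so $w_u$ is symmetric around $N/4$. Writing $u=(u-N/4)+N/4$ and noting that $(u-N/4)$ is antisymmetric around $N/4$ while $w_u$ is symmetric, the pairing $u\leftrightarrow N/2-u$ gives $\sum_u u\, w_u = 0$.

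For part (ii), decompose $\bar{q}_u = k_0 + k_1 u + k_2 u^2$. Part (i) kills the $k_0$ and $k_1 u$ contributions, leaving $k_2 \sum_u u^2 w_u$. Rewriting $u^2 = (u-N/4)^2 + (N/2) u - N^2/16$ and applying part (i) once more to absorb the linear piece $(N/2)u$ and the constant piece $-N^2/16$, one obtains $k_2 \sum_u u^2 w_u = k_2 \sum_u (u-N/4)^2 w_u$, which is exactly the claimed formula.

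The only step with any real content is the symmetry of $v_u$ about $N/4$: it is essential that the mirror of $\mathbf{w}_b$ lies in the \emph{specific} design $\tilde{\mathcal{W}}_H^k$, not merely in $\mathcal{W}$, and this is precisely what the mirror-property in the definition of a design guarantees. Everything else is routine rearrangement of a single finite sum, so no further machinery is needed beyond what is available from equation \eqref{eq:rel_share_mse}.
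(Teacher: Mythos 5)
Your proof is correct and follows essentially the same route as the paper's: both arguments rest on the two facts that the weights $v_u(\tilde{\mathcal{K}}_H)-n_u/N_A^2$ sum to zero and are symmetric about $u=N/4$, which together annihilate the constant and linear parts of $\bar q_u$ and let $u^2$ be replaced by $(u-N/4)^2$. If anything, your write-up is more complete, since you explicitly derive the symmetry $v_u=v_{N/2-u}$ from the mirror-property via the bijection $\mathbf{w}_b\mapsto \mathbf{1}-\mathbf{w}_b$, whereas the paper simply asserts that the weights are ``completely symmetrical around $u=N/4$.''
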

\begin{proof}
    To see this, let $g_u := v_u(\mathcal{\tilde{ K}}_H)-\frac{n_u}{N_A^2}$ and note that $\sum_{u=0}^{N/2}g_u=0$, and so $\sum_{u=0}^{N/2}g_u\bar{q}_u$ will not depend on $k_0$. Furthermore, $g_u$ is completely symmetrical around $u=N/4$ which means that $\sum_{u=0}^{N/2}g_u\bar{q}_u$ does not depend on $k_1u$. Finally, for the exact same reason, because $(u-N/4)^2$ is a second degree polynomial, it is the case that $\sum_{u=0}^{N/2}g_uu^2=\sum_{u=0}^{N/2}g_u(u-N/4)^2$.
\end{proof}

To find $\bar{q}_u$, note that we can write $r_a$ as
\begin{equation}
    r_a = (\hat\tau_a - \tau)^2 = \left(\frac{1}{(N/2)}\left(\sum_{i \in \mathcal{T}}  Y_i(1)-\sum_{i \in \mathcal{C}} Y_i(0)\right) - \tau\right)^2,
\end{equation}
where $\mathcal{T} \subset \{1,\ldots,N\}$ is the set of indexes for the treated units and $\mathcal{C}=\{1,\ldots,N\} \setminus \mathcal{T} $ the set of indexes for the control units. Let $\alpha_{ix}=Y_i(1)$ if unit $i$ is treated in assignment vector $x$ and $\alpha_{ix}=-Y_i(0)$ if unit $i$ is control in assignment vector $x$. We have
\begin{align}
    r_ar_b = 
    &\tau^4 - \frac{4\tau^3}{N} \left(\sum_{i=1}^N\alpha_{ia} + \sum_{i=1}^N\alpha_{ib}\right) + \nonumber \\ &
    \frac{4\tau^2}{N^2} \left(
        \sum_{i=1}^N\alpha_{ia}^2
        + \sum_{i=1}^N\alpha_{ib}^2
        + 4\sum_{i=1}^N\sum_{j=1}^N\alpha_{ia}\alpha_{jb}
        + 2\sum_{i=1}^{N-1}\sum_{j=i+1}^N \alpha_{ia}\alpha_{ja} 
        + 2\sum_{i=1}^{N-1}\sum_{j=i+1}^N \alpha_{ib}\alpha_{jb} \right) -\nonumber \\
    &
    \frac{16\tau}{N^3}
    \left(
        \sum_{i=1}^N\sum_{j=1}^N \alpha_{ia}^2\alpha_{jb} +
        \sum_{i=1}^N\sum_{j=1}^N \alpha_{jb}^2\alpha_{ia} +
        2\sum_{i=1}^{N-1}\sum_{j=1}^N\sum_{k=i+1}^N \alpha_{ia}\alpha_{jb}\alpha_{ka} +
        2\sum_{i=1}^{N-1}\sum_{j=1}^N\sum_{k=i+1}^N \alpha_{ib}\alpha_{ja}\alpha_{kb}
        \right)+ \nonumber\\
    &\frac{16}{N^4} \left( 
        \sum_{i=1}^N\sum_{j=1}^N \alpha_{ia}^2\alpha_{jb}^2 +
        2\sum_{i=1}^N\sum_{j=1}^{N-1}\sum_{k=j+1}^N \alpha_{ia}^2\alpha_{jb}\alpha_{kb} +
        2\sum_{i=1}^N\sum_{j=1}^{N-1}\sum_{k=j+1}^N \alpha_{ib}^2\alpha_{ja}\alpha_{ka} + \right. \nonumber \\
        & \left. 4\sum_{i=1}^{N-1}\sum_{j=1}^{N-1}\sum_{k=i+1}^N\sum_{l=j+1}^N\alpha_{ia}\alpha_{jb}\alpha_{ka}\alpha_{lb}
    \right). \label{eq:sum_qbar_u}
\end{align}
When we sum this expression over all assignment vectors with a given $u$, by conditions \ref{cond:cond1} and \ref{cond:cond2}, only the last sum will depend on $u$. The reason is that in all other cases, the sums never involve an expression for how often two units in assignment $a$ occur together with two units in assignment $b$. Hence, by lemma \ref{lem:bar_qu}\ref{lem:part_a}), we get
\begin{multline}
    \var\left(\mse(\hat\tau_{\{\mathcal{W}_H\}}):\mathcal{W}_H\in \mathcal{\tilde K}_H\right)= \\
    \frac{64}{N^4}\sum_{u=0}^{N/2}\left(v_u(\mathcal{\tilde{ K}}_H) - \frac{n_u}{N_A^2}\right)\frac{1}{n_u}\sum_{a=1}^{N_A}  \sum_{b=1}^{N_A}  1[U_{a,b}=u] \sum_{i=1}^{N-1}\sum_{j=1}^{N-1}\sum_{k=i+1}^N\sum_{l=j+1}^N\alpha_{ia}\alpha_{jb}\alpha_{ka}\alpha_{lb}. \label{eq:rel_share_mse2}
\end{multline}

The quadruple sum can be written as
\begin{multline}
    \sum_{i=1}^{N-1}\sum_{j=1}^{N-1}\sum_{k=i+1}^N\sum_{l=j+1}^N\alpha_{ia}\alpha_{jb}\alpha_{ka}\alpha_{lb} =
    \sum_{i=1}^{N-1}\sum_{j=i+1}^{N} \alpha_{ia}\alpha_{ib}\alpha_{ja}\alpha_{jb} + 
    \sum_{i=1}^{N-2}\sum_{j=i+1}^{N-1}\sum_{k=j+1}^{N}
    \left( 
        \alpha_{ia}\alpha_{ib}\alpha_{ja}\alpha_{kb} + \right.\\\left.
        \alpha_{ia}\alpha_{ib}\alpha_{jb}\alpha_{ka} +
        \alpha_{ia}\alpha_{ja}\alpha_{jb}\alpha_{kb} +
        \alpha_{ia}\alpha_{jb}\alpha_{ka}\alpha_{kb} +
        \alpha_{ib}\alpha_{ja}\alpha_{jb}\alpha_{ka} + 
        \alpha_{ib}\alpha_{ja}\alpha_{ka}\alpha_{kb}
        \right) + \\
        \sum_{i=1}^{N-3}\sum_{j=i+1}^{N-2}\sum_{k=j+1}^{N-1}\sum_{l=k+1}^N
        \left( 
        \alpha_{ia}\alpha_{ja}\alpha_{kb}\alpha_{lb} +
        \alpha_{ia}\alpha_{jb}\alpha_{ka}\alpha_{lb} +
        \alpha_{ia}\alpha_{jb}\alpha_{kb}\alpha_{la} +
        \alpha_{ib}\alpha_{ja}\alpha_{ka}\alpha_{lb} +
        \right. \\
        \left.
        \alpha_{ib}\alpha_{ja}\alpha_{kb}\alpha_{la} +
        \alpha_{ib}\alpha_{jb}\alpha_{ka}\alpha_{la}
    \right)\label{eq:aiajakal}
\end{multline}

Define the $E_u(\cdot)$ as the expectations operator over all pairwise assignments with a given $u$. I.e., 
\begin{equation}
    E_u\left(\sum_{i=1}^{N-1}\sum_{j=1}^{N-1}\sum_{k=i+1}^N\sum_{l=j+1}^N\alpha_{ia}\alpha_{jb}\alpha_{ka}\alpha_{lb}\right) = \frac{1}{n_u}\sum_{a=1}^{N_A}  \sum_{b=1}^{N_A}  1[U_{a,b}=u] \sum_{i=1}^{N-1}\sum_{j=1}^{N-1}\sum_{k=i+1}^N\sum_{l=j+1}^N\alpha_{ia}\alpha_{jb}\alpha_{ka}\alpha_{lb}
\end{equation}
By conditions \ref{cond:cond1} and \ref{cond:cond2}, the $a$- and $b$-indexes are interchangeable and so we have
\begin{multline}
    E_u\left( \sum_{i=1}^{N-1}\sum_{j=1}^{N-1}\sum_{k=i+1}^N\sum_{l=j+1}^N\alpha_{ia}\alpha_{jb}\alpha_{ka}\alpha_{lb} \right) = E_u\left(\sum_{i=1}^{N-1}\sum_{j=i+1}^{N} \alpha_{ia}\alpha_{ib}\alpha_{ja}\alpha_{jb}\right) + \\
    2E_u\left(\sum_{i=1}^{N-2}\sum_{j=i+1}^{N-1}\sum_{k=j+1}^{N}\alpha_{ia}\alpha_{ib}\alpha_{ja}\alpha_{kb}\right) +
    2E_u\left(\sum_{i=1}^{N-2}\sum_{j=i+1}^{N-1}\sum_{k=j+1}^{N}\alpha_{ia}\alpha_{ja}\alpha_{jb}\alpha_{kb}\right) + \\
    2E_u\left(\sum_{i=1}^{N-2}\sum_{j=i+1}^{N-1}\sum_{k=j+1}^{N}\alpha_{ia}\alpha_{jb}\alpha_{ka}\alpha_{kb}\right) + 2E_u\left(\sum_{i=1}^{N-3}\sum_{j=i+1}^{N-2}\sum_{k=j+1}^{N-1}\sum_{l=k+1}^N\alpha_{ia}\alpha_{ja}\alpha_{kb}\alpha_{lb}\right) + \\
    2E_u\left(\sum_{i=1}^{N-3}\sum_{j=i+1}^{N-2}\sum_{k=j+1}^{N-1}\sum_{l=k+1}^N\alpha_{ia}\alpha_{jb}\alpha_{ka}\alpha_{lb}\right) +
    2E_u\left(\sum_{i=1}^{N-3}\sum_{j=i+1}^{N-2}\sum_{k=j+1}^{N-1}\sum_{l=k+1}^N\alpha_{ia}\alpha_{jb}\alpha_{kb}\alpha_{la}\right). \label{eq:exp_iajbkalb}
\end{multline}

It is the case that 
\begin{multline}
     E_u\left(\sum_{i=1}^{N-1}\sum_{j=i+1}^{N} \alpha_{ia}\alpha_{ib}\alpha_{ja}\alpha_{jb}\right) = \sum_{i=1}^{N-1}\sum_{j=i+1}^{N}p_{1111}' Y_i(1)^2Y_j(1)^2 -
    (p_{1110}' + p_{1101}') Y_i(1)^2Y_j(1)Y_j(0) - \\
    (p_{1011}' + p_{0111}') Y_i(1)Y_i(0)Y_j(1)^2+ 
    (p_{1010}' + p_{0101}' + p_{1001}' + p_{0110}') Y_i(1)Y_i(0)Y_j(1)Y_j(0)- \\
    (p_{0001}' + p_{0010}') Y_i(0)^2Y_j(1)Y_j(0)- (p_{1000}' + 
    p_{0100}') Y_i(1)Y_i(0)Y_j(0)^2+ 
    p_{0011'} Y_i(1)^2Y_j(1)^2 + \\
     p_{1100}' Y_i(1)^2Y_j(0)^2 + p_{0000}' Y_i(0)^2Y_j(0)^2,
\end{multline}
where $p_{x_1x_2x_3x_4}'=\Pr(w_{a}^i=x_1\cap w_{b}^i=x_2 \cap w_{a}^j = x_3 \cap w_{b}^j = x_4)$ and $w_{a}^i$ being the treatment status of unit $i$ in assignment vector $a$. More compactly, it can be written as 
\begin{multline}
     E_u\left(\sum_{i=1}^{N-1}\sum_{j=i+1}^{N} \alpha_{ia}\alpha_{ib}\alpha_{ja}\alpha_{jb}\right) = \sum_{i=1}^{N-1}\sum_{j=i+1}^{N}\sum_{x_1=0}^1\sum_{x_2=0}^1\sum_{x_3=0}^1\sum_{x_4=0}^1 \pi p_{x_1x_2x_3x_4}' Y_i(x_1)Y_i(x_2)Y_j(x_3)Y_j(x_4), \label{eq:pr_double_sum}
\end{multline}
where $\pi$ equals 1 if $x_1+x_2+x_3+x_4$ is even and 0 otherwise.

Similarly, we have
\begin{multline}
     E_u\left(\sum_{i=1}^{N-2}\sum_{j=i+1}^{N-1}\sum_{k=j+1}^{N}\alpha_{ia}\alpha_{ib}\alpha_{ja}\alpha_{kb}\right) = \\ \sum_{i=1}^{N-2}\sum_{j=i+1}^{N-1}\sum_{k=j+1}^{N}\sum_{x_1=0}^1\sum_{x_2=0}^1\sum_{x_3=0}^1\sum_{x_4=0}^1 \pi p_{x_1x_2x_3x_4}'' Y_i(x_1)Y_i(x_2)Y_j(x_3)Y_k(x_4), \label{eq:pr_triple_sum}
\end{multline}
where $p_{x_1x_2x_3x_4}''=\Pr(w_{a}^i=x_1\cap w_{b}^i=x_2 \cap w_{a}^j = x_3 \cap w_{b}^k = x_4)$. To get the other two terms of the triple sum, the indexes just have to be changed. Finally,
\begin{multline}
     E_u\left(\sum_{i=1}^{N-3}\sum_{j=i+1}^{N-2}\sum_{k=j+1}^{N-1}\sum_{l=k+1}^N\alpha_{ia}\alpha_{ja}\alpha_{kb}\alpha_{lb}\right) = \\ \sum_{i=1}^{N-3}\sum_{j=i+1}^{N-2}\sum_{k=j+1}^{N-1}\sum_{l=k+1}^N\sum_{x_1=0}^1\sum_{x_2=0}^1\sum_{x_3=0}^1\sum_{x_4=0}^1 \pi p_{x_1x_2x_3x_4}''' Y_i(x_1)Y_j(x_2)Y_k(x_3)Y_l(x_4), \label{eq:pr_quadruple_sum}
\end{multline}
where $p_{x_1x_2x_3x_4}'''=\Pr(w_{a}^i=x_1\cap w_{a}^j=x_2 \cap w_{b}^k = x_3 \cap w_{b}^l = x_4)$. Once again, to get the other two terms of the quadruple sum, the indexes just have to be changed. The question now is to find all probabilities $p_{x_1x_2x_3x_4}'$, $p_{x_1x_2x_3x_4}''$ and $p_{x_1x_2x_3x_4}'''$.

We begin with $p_{x_1x_2x_3x_4}'$ and fix the treatment status of units $i,j$ in both $\mathbf{w}_a$ and $\mathbf{w}_b$. Let $c_1' \in\{0,1,2\}$ be the number of units $i,j$ treated in both $\mathbf{w}_a$ and $\mathbf{w}_b$, $c_2' \in\{0,1,2\}$ is the number of units $i,j$ treated in $\mathbf{w}_a$ but not in $\mathbf{w}_b$ and $c_3' \in\{0,1,2\}$ is the number of units $i,j$ not treated in $\mathbf{w}_a$ but treated in $\mathbf{w}_b$. The number of ways that the remaining $N-2$ other units can be assigned to treatment in $\mathbf{w}_a$ and $\mathbf{w}_b$ is $\binom{N-2}{N/2-c_1'-c_2'}\binom{N/2-c_1'-c_2'}{N/2-U_{a,b}-c_1'}\binom{N-2-N/2+c_1'+c_2'}{U_{a,b}-c_3'}$, where the first binomial coefficient is the number of ways all units but $i$ and $j$ can be selected into treatment in assignment vector $\mathbf{w}_a$, the second term how many of these that are also selected into treatment in assignment vector $\mathbf{w}_b$ and the third the number of ways the rest of the units in $\mathbf{w}_b$ can be selected into treatment. For two random assignment vectors $\mathbf{w}_a$ and $\mathbf{w}_b$, we have
\begin{equation}
    \Pr(c_1',c_2',c_3'|U_{a,b}=u) = \frac{\binom{N-2}{N/2-c_1'-c_2'}\binom{N/2-c_1'-c_2'}{N/2-u-c_1'}\binom{N-2-N/2+c_1'+c_2'}{u-c_3'}}{\binom{N}{N/2}\binom{N/2}{N/2-u}\binom{N/2}{u}},
    \label{eq:p2}
\end{equation}
where the denominator is the total number of ways to select treatment in two assignment vectors with a given $u$.

To find $p_{x_1x_2x_3x_4}'$, note that each combination of $x_1,x_2,x_3,x_4$, by definition, maps to one of the $2^4=16$ permutations of treatment statuses in $i,j$ for two random assignment vectors, each being equally likely. Each permutation in turn implies a given $\Pr(c_1',c_2',c_3'|U_{a,b}=u)$. So, to find, for instance, $p_{1011}'$, we use the fact that this implies $c_1'=1,c_2'=0,c_3'=1$. Using this methodology, we get
\begin{align*}
    &p_{1111}' = p_{0000}' = \frac{1}{\Perm{N}{2}}\left(\frac{N^2}{4}-\frac{N}{2}-Nu + u^2+u\right), \\
    &p_{1110}' = p_{1101}' = p_{1011}' = p_{0111}' = p_{0001}' = p_{0010}' = p_{0100}' = p_{1000}' =  \frac{1}{\Perm{N}{2}}\left(\frac{N}{2}u-u^2\right), \\
    &p_{1010}' = p_{0101}' = \frac{1}{\Perm{N}{2}}\left(u^2-u\right), \\
    &p_{1001}' = p_{0110}' = \frac{1}{\Perm{N}{2}}u^2, \\
    &p_{1100}'=p_{0011}' =\frac{1}{\Perm{N}{2}}\left(\frac{N^2}{4} - Nu + u^2\right),
\end{align*}
where $\Perm{N}{2}$ is the number of arrangments, $N!/(N-2)!=N(N-1)$.

Let $c_1'' \in\{0,1,2,3\}$ be the number of units $i,j,k$ treated in both $\mathbf{w}_a$ and $\mathbf{w}_b$, $c_2'' \in\{0,1,2,3\}$ is the number of units $i,j,k$ treated in $\mathbf{w}_a$ but not in $\mathbf{w}_b$ and $c_3'' \in\{0,1,2,3\}$ is the number of units $i,j,k$ not treated in $\mathbf{w}_a$ but treated in $\mathbf{w}_b$. Analogous to the reasoning above, we now have
\begin{equation}
    \Pr(c_1'',c_2'',c_3''|U_{a,b}=u) = \frac{\binom{N-3}{N/2-c_1''-c_2''}\binom{N/2-c_1''-c_2''}{N/2-u-c_1''}\binom{N-3-N/2+c_1''+c_2''}{u-c_3''}}{\binom{N}{N/2}\binom{N/2}{N/2-u}\binom{N/2}{u}}.
    \label{eq:p3}
\end{equation}
To find $p_{x_1x_2x_3x_4}''$, note that each combination of $x_1,x_2,x_3,x_4$, now maps to four of the $2^6=64$ permutations of treatment statuses in $i,j,k$ for two random assignment vectors, each being equally likely. For instance, to get $p_{1011}''$, aside from $w_{a}^i=1,w_{a}^j=0,w_{b}^i=1,w_{b}^k=1$, there are now four possible permutations of $w_{a}^k,w_{b}^j$, each implying a different $c_1''$, $c_2''$, $c_3''$. To get $p_{1011}''$, simply add these four implied probabilities in equation \eqref{eq:p3} together. Doing so for all $p_{x_1x_2x_3x_4}''$, we get
\begin{align*}
    &p_{1111}'' = p_{0000}'' = \frac{1}{\Perm{N}{3}}\left(\frac{N^3}{8} - \frac{N^2}{4}u - \frac{3N^2}{4} + 2Nu + N - u^2 - 2u\right), \\
    &p_{1110}'' = p_{1101}'' = p_{0001}''=p_{0010}'' = \frac{1}{\Perm{N}{3}}\left(\frac{N^2}{4}u -Nu+  u^2\right), \\
    &p_{0111}'' = p_{1011}'' = p_{1000}'' = p_{0100}'' = \frac{1}{\Perm{N}{3}}\left(\frac{N^3}{8}-\frac{N^2}{4}u-\frac{N^2}{4} + u^2\right), \\
    &p_{1001}'' = p_{0110}'' = \frac{1}{\Perm{N}{3}}\left(\frac{N^2}{4}u - u^2\right), \\
    &p_{1100}'' = p_{0011}'' =  \frac{1}{\Perm{N}{3}}\left(\frac{N^2}{4}u - Nu - u^2 + 2u\right),\\
    &p_{1010}'' = p_{0101}'' = \frac{1}{\Perm{N}{3}}\left(\frac{N^3}{8} - \frac{N^2}{4}u - \frac{N^2}{4} + Nu - u^2\right). \\
\end{align*}

Finally, let $c_1''' \in\{0,1,2,3,4\}$ be the number of units $i,j,k,l$ treated in both $\mathbf{w}_a$ and $\mathbf{w}_b$, $c_2''' \in\{0,1,2,3,4\}$ the number of units  $i,j,k,l$ treated in $\mathbf{w}_a$ but not in $\mathbf{w}_b$ and $c_3'' \in\{0,1,2,3\}$  the number of units $i,j,k,l$ not treated in $\mathbf{w}_a$ but treated in $\mathbf{w}_b$. We now have
\begin{equation}
    \Pr(c_1''',c_2''',c_3'''|u=u) = \frac{\binom{N-4}{N/2-c_1'''-c_2''}\binom{N/2-c_1'''-c_2'''}{N/2-u-c_1'''}\binom{N-4-N/2+c_1'''+c_2''}{u-c_3'''}}{\binom{N}{N/2}\binom{N/2}{N/2-u}\binom{N/2}{u}}.
    \label{eq:p4}
\end{equation}

To find $p_{x_1x_2x_3x_4}'''$, each combination of $x_1,x_2,x_3,x_4$, now maps to 16 of the $2^8=256$ permutations of treatment statuses in $i,j,k,l$ for two random assignment vectors, each being equally likely. Using the same procedure as above to add these 16 probabilities in equation \eqref{eq:p4} together for each $p_{x_1x_2x_3x_4}'''$, we get
\begin{align*}
    &p_{1111}''' = p_{0000}''' = \frac{1}{\Perm{N}{4}}\left(\frac{N^4}{16} - \frac{3N^3}{4} + \frac{11N^2}{4} + N^2u - 6Nu - 3N + 2u^2 + 6u\right), \\
    &p_{1110}''' = p_{1101}''' = p_{0001}'''=p_{0010}''' = p_{0111}''' = p_{1011}''' = p_{1000}''' = p_{0100}''' =\frac{1}{\Perm{N}{4}}\left(\frac{N^4}{16} - \frac{3N^3}{8} + \frac{N^2}{2} + Nu - 2u^2\right), \\
    &p_{1001}''' = p_{0110}''' = p_{1010}''' = p_{0101}''' =\frac{1}{\Perm{N}{4}}\left(\frac{N^4}{16} - \frac{N^3}{4} + \frac{N^2}{4} - Nu + 2u^2\right), \\
    &p_{1100}''' = p_{0011}''' =  \frac{1}{\Perm{N}{4}}\left(\frac{N^4}{16} - \frac{N^3}{4} +  \frac{N^2}{4} - N^2u + 4Nu + 2u^2 - 6u\right). \\
\end{align*}
We get that equation \eqref{eq:pr_double_sum} can be written as
\begin{multline}
     E_u\left(\sum_{i=1}^{N-1}\sum_{j=i+1}^{N} \alpha_{ia}\alpha_{ib}\alpha_{ja}\alpha_{jb}\right) = \\ \frac{u^2}{\Perm{N}{2}}\sum_{i=1}^{N-1}\sum_{j=i+1}^{N}\sum_{x_1=0}^1\sum_{x_2=0}^1\sum_{x_3=0}^1\sum_{x_4=0}^1 Y_i(x_1)Y_j(x_2)Y_i(x_3)Y_j(x_4) + R_1 = \frac{u^2}{\Perm{N}{2}} S_1 + R_1.
\end{multline}
where $R_1$ is the constant and linear terms of $u$. Similarly, by symmetry, we have
\begin{multline}
    E_u\left(\sum_{i=1}^{N-2}\sum_{j=i+1}^{N-1}\sum_{k=j+1}^{N}\alpha_{ia}\alpha_{ib}\alpha_{ja}\alpha_{kb}\right) +
    E_u\left(\sum_{i=1}^{N-2}\sum_{j=i+1}^{N-1}\sum_{k=j+1}^{N}\alpha_{ia}\alpha_{ja}\alpha_{jb}\alpha_{kb}\right) + \\
    E_u\left(\sum_{i=1}^{N-2}\sum_{j=i+1}^{N-1}\sum_{k=j+1}^{N}\alpha_{ia}\alpha_{jb}\alpha_{ka}\alpha_{kb}\right) = \\ -\frac{u^2}{\Perm{N}{3}} \left(
    \sum_{i=1}^{N-2}\sum_{j=i+1}^{N-1}\sum_{k=j+1}^{N}\sum_{x_1=0}^1\sum_{x_2=0}^1\sum_{x_3=0}^1\sum_{x_4=0}^1 Y_i(x_1)Y_j(x_2)Y_i(x_3)Y_k(x_4) + \right. \\ \left.
    \sum_{i=1}^{N-2}\sum_{j=i+1}^{N-1}\sum_{k=j+1}^{N}\sum_{x_1=0}^1\sum_{x_2=0}^1\sum_{x_3=0}^1\sum_{x_4=0}^1 Y_i(x_1)Y_j(x_2)Y_j(x_3)Y_k(x_4) +  \right. \\ \left. \sum_{i=1}^{N-2}\sum_{j=i+1}^{N-1}\sum_{k=j+1}^{N}\sum_{x_1=0}^1\sum_{x_2=0}^1\sum_{x_3=0}^1\sum_{x_4=0}^1 Y_i(x_1)Y_k(x_2)Y_j(x_3)Y_k(x_4) \right) + R_2 = -\frac{u^2}{\Perm{N}{3}} S_2 + R_2.
\end{multline}

And finally, we have
\begin{multline}
     E_u\left(\sum_{i=1}^{N-3}\sum_{j=i+1}^{N-2}\sum_{k=j+1}^{N-1}\sum_{l=k+1}^N\alpha_{ia}\alpha_{ja}\alpha_{kb}\alpha_{lb}\right) = \\ 
     2\frac{u^2}{\Perm{N}{4}}\left(
     \sum_{i=1}^{N-3}\sum_{j=i+1}^{N-2}\sum_{k=j+1}^{N-1}\sum_{l=k+1}^N\sum_{x_1=0}^1\sum_{x_2=0}^1\sum_{x_3=0}^1\sum_{x_4=0}^1 Y_i(x_1)Y_j(x_2)Y_k(x_3)Y_l(x_4)+\right. \\ \left. 
     \sum_{i=1}^{N-3}\sum_{j=i+1}^{N-2}\sum_{k=j+1}^{N-1}\sum_{l=k+1}^N\sum_{x_1=0}^1\sum_{x_2=0}^1\sum_{x_3=0}^1\sum_{x_4=0}^1 Y_i(x_1)Y_k(x_2)Y_j(x_3)Y_l(x_4)+\right. \\ \left.
     \sum_{i=1}^{N-3}\sum_{j=i+1}^{N-2}\sum_{k=j+1}^{N-1}\sum_{l=k+1}^N\sum_{x_1=0}^1\sum_{x_2=0}^1\sum_{x_3=0}^1\sum_{x_4=0}^1 Y_i(x_1)Y_l(x_2)Y_j(x_3)Y_k(x_4)\right) 
     +R_3 = 2\frac{u^2}{\Perm{N}{4}}S_3 + R_3.
\end{multline}

Equation \eqref{eq:exp_iajbkalb} now becomes
\begin{equation}
    E_u\left( \sum_{i=1}^{N-1}\sum_{j=1}^{N-1}\sum_{k=i+1}^N\sum_{l=j+1}^N\alpha_{ia}\alpha_{jb}\alpha_{ka}\alpha_{lb} \right) = \frac{u^2}{\Perm{N}{2}} S_1 + R_1 -2\frac{u^2}{\Perm{N}{3}} S_2 + R_2 + 4\frac{u^2}{\Perm{N}{4}}S_3 + R_3.
\end{equation}
Inserting in equation \eqref{eq:rel_share_mse2}, defining $\psi := \frac{1}{\Perm{N}{2}}S_1 -2 \frac{1}{\Perm{N}{3}}S_2 + 4 \frac{1}{\Perm{N}{4}}S_3$ and utilizing lemma \ref{lem:bar_qu}\ref{lem:part_a}) and lemma \ref{lem:bar_qu}\ref{lem:part_b}), we get
\begin{equation}
    \var\left(\mse(\hat\tau_{\{\mathcal{W}_H\}}):\mathcal{W}_H\in \mathcal{\tilde K}_H\right)= \\
     \psi\frac{64}{N^4}\sum_{u=0}^{N/2}\left(v_u(\mathcal{\tilde{ K}}_H) - \frac{n_u}{N_A^2}\right)(u-N/4)^2.
\end{equation}
Finally, note that $v_0(\mathcal{\tilde{ K}}_H)=v_{N/2}(\mathcal{\tilde{ K}}_H)=1/H$, whereas $n_0=n_{N/2}=N_A$. Because we use $(u-N/4)^2$ instead of $u^2$, we get
\begin{equation}
    \var\left(\mse(\hat\tau_{\{\mathcal{W}_H\}}):\mathcal{W}_H\in \mathcal{\tilde K}_H\right)=\frac{64}{N^4}\psi \left( \frac{N_A-H}{HN_A}\frac{N^2}{8} + \sum_{u=1}^{N/2-1}\left(v_u(\mathcal{\tilde{ K}}_H) - \frac{n_u}{N_A^2}\right) (u - N/4)^2 \right).
\end{equation}
Using the definition of $\phi(\mathcal{\tilde{ K}}_H)$ in equation \eqref{eq:def_phi} in the paper and noting that 
\begin{equation}
    \phi(\mathcal{ K}):=\left( \frac{4}{N} \right)^2 \frac{N_A}{N_A-2} \sum_{u=1}^{N/2-1}\frac{n_u}{N_A^2}  (u - N/4)^2,
\end{equation}
we get the result in theorem \ref{thm:varvar_u}.

\clearpage

\subsection*{Simulation results for $N=100$}

\begin{figure}[h!]
     \includegraphics[width=\linewidth]{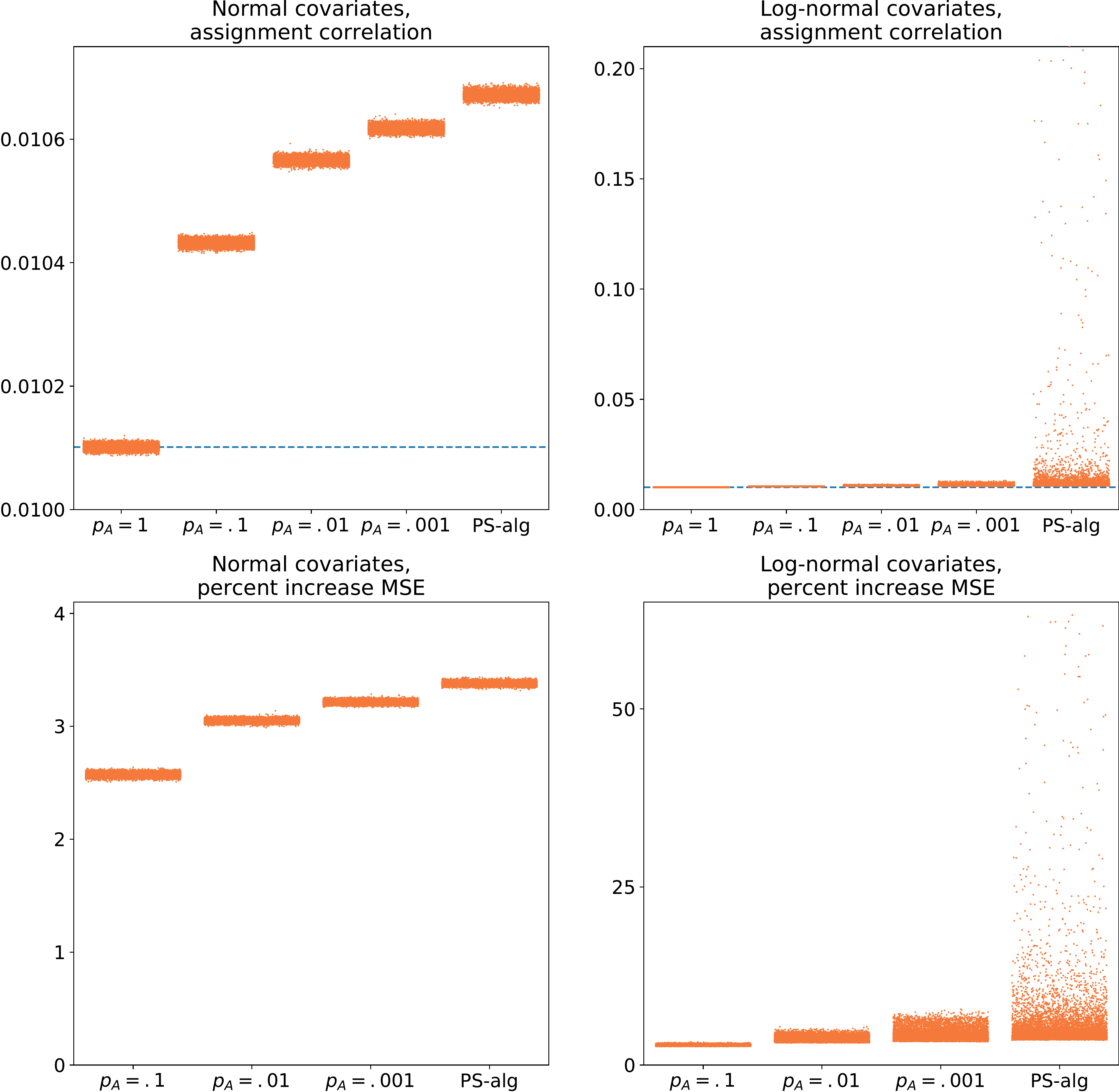}
     \caption{Distribution of assignment correlation for different designs, $N=100$ \label{fig:dist_as_cor100}}
     \floatfoot{Note: The top panel shows the distribution of the assignment correlation for different designs on normal (left graph) and lognormal (right graph) covariates with the dashed horizontal line indicating the assignment correlation for complete randomization. The bottom panel shows the corresponding MSE increase from a standard deviation increase relative to its expectation (see equation \ref{eq:rel_var_increase} in the paper). $N=100$ and for each sample and the assignment correlations are approximated with 10,000 assignment vectors (including mirrors). 10,000 random samples are drawn. Note that the scale on $y$-axis is different for each graph.}
\end{figure}

\newpage

\begin{threeparttable}[p!]
    \singlespace
    \begin{tabular*}{\textwidth}{@{\hskip\tabcolsep\extracolsep\fill}l*{1}{cccccccc}}\toprule
 & & & & \multicolumn{4}{c}{Quantiles}\\
\cmidrule(lr){5-8}& Mean & Min & Max & 0.5 & 0.75 & 0.975 & 0.999\\
\midrule\addlinespace\multicolumn{8}{l}{\textit{Five standard normal covariates}}\\
\addlinespace$p_A=1$ & 0.0101 & 0.0101 & 0.0101 & 0.0101 & 0.0101 & 0.0101 & 0.0101 \\
& & & & & & & \\ 
$p_A=0.1$ & 0.0104 & 0.0104 & 0.0104 & 0.0104 & 0.0104 & 0.0104 & 0.0104 \\
 & (2.6) & (2.5) & (2.6) & (2.6) & (2.6) & (2.6) & (2.6) \\
\addlinespace
$p_A=0.01$ & 0.0106 & 0.0105 & 0.0106 & 0.0106 & 0.0106 & 0.0106 & 0.0106 \\
 & (3.0) & (3.0) & (3.1) & (3.0) & (3.1) & (3.1) & (3.1) \\
\addlinespace
$p_A=0.001$ & 0.0106 & 0.0106 & 0.0106 & 0.0106 & 0.0106 & 0.0106 & 0.0106 \\
 & (3.2) & (3.2) & (3.3) & (3.2) & (3.2) & (3.2) & (3.3) \\
\addlinespace
PS-alg & 0.0107 & 0.0107 & 0.0107 & 0.0107 & 0.0107 & 0.0107 & 0.0107 \\
 & (3.4) & (3.3) & (3.4) & (3.4) & (3.4) & (3.4) & (3.4) \\
\addlinespace
\addlinespace\multicolumn{8}{l}{\textit{Five log-normal covariates}}\\
\addlinespace$p_A=1$ & 0.0101 & 0.0101 & 0.0101 & 0.0101 & 0.0101 & 0.0101 & 0.0101 \\
& & & & & & & \\ 
$p_A=0.1$ & 0.0105 & 0.0104 & 0.0106 & 0.0105 & 0.0105 & 0.0105 & 0.0106 \\
 & (2.8) & (2.6) & (3.2) & (2.8) & (2.9) & (3.0) & (3.1) \\
\addlinespace
$p_A=0.01$ & 0.0108 & 0.0106 & 0.0116 & 0.0107 & 0.0108 & 0.0111 & 0.0114 \\
 & (3.6) & (3.1) & (5.4) & (3.5) & (3.9) & (4.4) & (5.0) \\
\addlinespace
$p_A=0.001$ & 0.0110 & 0.0106 & 0.0132 & 0.0108 & 0.0111 & 0.0119 & 0.0128 \\
 & (4.1) & (3.3) & (7.8) & (3.8) & (4.4) & (6.0) & (7.3) \\
\addlinespace
PS-alg & 0.0123 & 0.0107 & 0.2098 & 0.0110 & 0.0113 & 0.0149 & 0.1761 \\
 & (5.2) & (3.5) & (63.2) & (4.1) & (5.0) & (9.8) & (57.6) \\
\addlinespace
\bottomrule\end{tabular*}
    \caption{Distribution of assignment correlation for different designs, $N=100$ \label{tab:dist_as_cor100}}
    \begin{tablenotes}
      \item[] \footnotesize Note: The table presents data from the same simulated distributions of the assignment correlation as is shown in Figure \ref{fig:dist_as_cor100}. The corresponding MSE increase from a standard deviation increase relative to its expectation (see equation \ref{eq:rel_var_increase} in the paper) is shown in parentheses. $N=100$ and for each sample, the assignment correlations are approximated with 10,000 assignment vectors (including mirrors). 10,000 random samples are drawn.
    \end{tablenotes}
\end{threeparttable}

\end{document}